\journal{Theoretical Computer Science}
\newtheorem{prop}{Proposition}
\newtheorem{lem}{Lemma}
\newtheorem{thm}{Theorem}
\definecolor{MNred}{RGB}{100,20,20}
\begin{document}

\setlength{\parindent}{0pt}

\begin{frontmatter}
  \title{About non-monotony in Boolean automata networks}
  
  \author{Mathilde Noual$^{1,3}$, Damien Regnault$^{2}$, Sylvain
    Sen{\'e}$^{2,3,\star}$}

  \address{$^1~$Universit{\'e} de Lyon, {\'E}NS-Lyon, LIP, CNRS UMR 5668, 69007
    Lyon, France\\$^2~$Université d'{\'E}vry -- Val d'Essonne, IBISC, {\'E}A
    4526, 91000 {\'E}vry, France\\$^3~$Institut rh{\^o}ne-alpin des syst{\`e}mes
    complexes, IXXI, 69007 Lyon, France\\$^\star~$Corresponding author:
    \texttt{sylvain.sene@ibisc.univ-evry.fr}}
  
  \begin{abstract}
    This paper aims at setting the keystone of a prospective theoretical study
    on the role of non-monotone interactions in biological regulation networks.
    Focusing on discrete models of these networks, namely, Boolean automata
    networks, we propose to analyse the contribution of non-monotony to the
    diversity and complexity in their dynamical behaviours. More precisely, in
    this paper, we start by detailing some motivations, both mathematical and
    biological, for our interest in non-monotony, and we discuss how it may
    account for phenomena that cannot be produced by monotony only. Then, to
    build some understanding in this direction, we propose some preliminary
    results on the dynamical behaviour of some specific non-monotone Boolean
    automata networks called \textsc{xor} circulant  networks.
  \end{abstract}

  \begin{keyword}
    Discrete dynamical systems, Boolean automata networks, non-monotony,
    dynamical behaviours.
  \end{keyword}
\end{frontmatter}

\section{Introduction}

The introduction of Boolean automata networks by McCulloch and Pitts
in~\cite{McCulloch1943} and Kauffman in~\cite{Kauffman1969a,Kauffman1969b} has
initiated many developments in the study of discrete dynamical systems at the
frontier of biology, mathematics and theoretical computer science. In the
context of modelling biological regulation networks, the pertinence of abstract
networks was deeply motivated by Hopfield and Kauffman in the respective
contexts of neural and genetic networks. Among other things, Hopfield showed
in~\cite{Hopfield1982,Hopfield1984} that threshold Boolean automata networks
allow to highlight the fundamental neural concepts of associative memory and
learning.  In~\cite{Kauffman1971,Kauffman1993}, on the basis of the
breakthroughs of Jacob and Monod~\cite{Jacob1961a,Jacob1961b}, Kauffman put
emphasis on the Boolean nature of genes that are simply either actively
transcribing or not. These works as well as
Thomas'~\cite{Thomas1973,Thomas1981,Thomas1991} placed formal approaches at the
centre of the understanding of dynamical behaviours and complexity in
biology. In particular, both these works claimed that theoretical frameworks
would certainly allow biologists to bypass the observational knowledge which
cannot, alone, lead to general conclusions. Since then, numerous theoretical
studies have been carried out to acquire a better understanding of these
networks, from the computational complexity
standpoint~\cite{Floreen1989,Cosnard1992,Koiran1993,Orponen1997,Gajardo2002} as
well as from the standpoint of the characterisation of their dynamical
behaviours~\cite{Robert1986,Goles1981,Goles1990,Aracena2004a,Aracena2004b,Remy2008,Demongeot2010,Demongeot2011,Richard2011}.\medskip

In the lines of these studies and in order to complete them, we propose in this
paper to tackle the question of the role of non-monotony in Boolean automata
networks. This question seems to be missing in classical literature dealing with
Boolean automata networks as models of biological networks and, in particular,
as models of  genetic regulation networks. Indeed, on the one hand, the
underlying interaction structure of Boolean models of genetic regulation
networks are often represented by \emph{signed} digraphs where vertices
represent genes and arcs, which are labelled either by a plus or a minus sign,
represent directed actions of genes on one another, either activations or
inhibitions.  This way,  a gene that tends to influence the
expression of another gene is supposed to be either one of its activators or one
of its inhibitors. It cannot be both. That is, it cannot act as an activator
under some circumstances and act as an inhibitor under some others. This
interpretation of gene regulations leads to define monotone Boolean automata
networks as studied
in~\cite{Goles1985,Cosnard1997,Remy2003,Chaouiya2004,Colon-Reyes2005,Jarrah2010}
and~\cite{Mendoza1998,Mendoza1999,Aracena2006,Georgescu2008,Mendoza2010} from
theoretical and applied points of view respectively. On the other hand, the
class of linear networks has also been studied. This class contains in
particular the special non-monotone networks in which \emph{all} local functions
are \textsc{xor} functions. In~\cite{Cull1971}, Cull based his study
on~\cite{McCulloch1943,Huffman1956,Elspas1959} and developed an algebraic
description of the dynamical behaviour of linear
networks. In~\cite{Snoussi1980}, Snoussi gave a characterisation of behaviours
of very specific \textsc{xor} networks.  But the global dynamical properties of
general non-monotone networks have not yet been studied nor has the impact of
non-monotone interactions yet been examined {\it per se}.\medskip

Our recent studies on Boolean automata networks have however brought us to
believe that non-monotony may be one of the main causes of singular behaviours
of Boolean automata networks.  Thus, this research axis seems very pertinent in
the context of biological regulation networks. The present paper provides the
grounds of a \emph{prospective study} on non-monotony in networks. In this
context, we develop two lines. First, with some examples, we give some insights
that support the importance of non-monotony and the idea that it may be
responsible for peculiar network dynamical behaviours. Second, to serve as a
tangible starting point and build intuition, we present some primary results
concerning a particular class of non-monotone networks called \textsc{xor}
circulant networks.\medskip

In Section~\ref{sec_preliminaries}, we provide general definitions and notations
about Boolean automata networks that are crucial for the
sequel. Section~\ref{sec_motivations} details why we believe that non-monotony
is in some sense at the centre of the existence of specific dynamical
behaviours. Section~\ref{sec_xor} presents preliminary results concerning
 \textsc{xor} circulant networks. In particular, it gives properties of their
trajectorial behaviours by focusing on convergence times and of their
asymptotic behaviours by characterising attractors. Eventually,
Section~\ref{sec_conclusion} proposes perspectives to this first work on
non-monotony.

\section{Preliminary elements on Boolean automata networks}
\label{sec_preliminaries}

Informally, a \emph{Boolean automata network} involves interacting elements
whose states, which either equal $0$ (inactive) or $1$ (active), may change over
time under the influence of the states of other network
elements~\cite{Robert1986,Choffrut1988}.  This section formalises this
description by presenting the main definitions and notations which are used in
this paper.

\subsection{Structure and local transition functions}
\label{sec_structure_and_function}

A \emph{Boolean automata network} $N$ of size $n$ is composed of $n$ elements
called automata which are, by convention here, numbered from $0$ to $n-1$. For
any automaton $i$ of $V = \{0, \ldots, n-1\}$, the set of possible \emph{states}
$x_i$ of $i$ is $\{0,1\}$. Let us assume that the \emph{time space}
$\mathcal{T}$ is discrete, \emph{i.e.}, $\mathcal{T} = \mathbb{N}$. A
\emph{configuration} of $N$ corresponds to the allocation of a value of
$\{0,1\}$ to every automaton of $N$. It can thus be represented by a vector $x =
(x_0, \ldots, x_{n-1}) \in \{0,1\}^n$ and  $\{0,1\}^n$ is then the
\emph{configuration space} of $N$. Abusing language, we will denote by $x(t)$
(resp. $x_i(t)$) the configuration of $N$ (resp. the state of automaton $i$) at
time step $t \in \mathcal{T}$. Given an arbitrary configuration $x \in
\{0,1\}^n$, the \emph{density} of $x$ is defined as $d(x) =
\frac{1}{n}\cdot|\{x_i\ |\ (i \in V) \land (x_i = 1)\}|$. In our context, we
focus particularly on switches of automata states starting in a given network
configuration. For this reason, the following notations for network
configurations will be useful:
\begin{multline}
  \label{eq_notations_basiques}
  \forall x = (x_0, \ldots, x_{n-1}) \in \{0,1\}^n, \forall i \in V = \{0,
  \ldots, n-1\},\\ 
    \overline{x}^i = (x_0, \ldots, x_{i-1}, \neg x_i, x_{i+1},
  \ldots, x_{n-1})\\ \text{and, }\forall W\subseteq
  V,\ \overline{x}^{W\cup\{i\}}=\overline{\overline{x}^W}^i \text{.}
\end{multline}
Thus, in particular, $\overline{0}^i$ where $i\in V$ (resp. $\overline{0}^W$
where $W\subseteq V$) denotes the network configuration in which automaton $i$ has
state $1$ (resp. all the automata belonging to $W$ have state $1$) and all other
automata have state $0$.  The underlying interaction structure of $N$ can be
represented by a digraph $G = (V, A)$, called the \emph{interaction graph} of
$N$. In this digraph, $V$ equals the set of automata of $N$. $A \subseteq V
\times V$ is the interaction set. For any automata $i,j\in V$, it satisfies
$(j,i) \in A$ if and only if $j$ effectively influences $i$, that is, in some
network configurations (but not necessarily in all of them), the state of $j$
may cause a change of states of $i$ (see Equation~\ref{eq_effectivite}
below). As an example, Figure~\ref{fig_structure}~(left) pictures the
interaction graph of a Boolean automata network of size $3$, where $A =
\{(0,0),(0,1),(0,2),(1,0),(2,0),(2,1)\}$.
\begin{figure}[t!]
  \centerline{\includegraphics[scale=0.8]{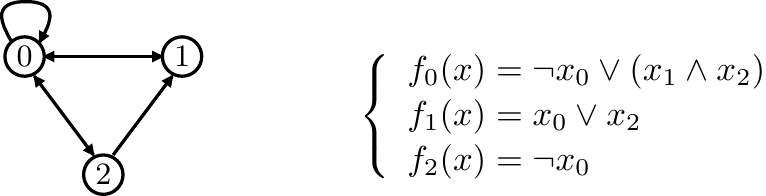}}
  \caption{(left) An interaction graph of a Boolean automata network of size $3$
    and (right) the local transition functions of its automata.}
  \label{fig_structure}
\end{figure}
Interaction graphs specify what influences apply to each automaton of a network
$N$. The nature of these influences are defined by the \emph{local transition
  functions} $f_i: \{0,1\}^n \to \{0,1\}$ which are associated to each automaton
$i$ of $N$ (as in Figure~\ref{fig_structure}~(right)) such that:
\begin{equation}
  \label{eq_effectivite}
  \exists x \in \{0,1\}^n,\ f_i(x) \neq f_i(\overline{x}^j)\ \iff\ 
  (j,i) \in A\text{.}
\end{equation}
Thus, a Boolean automata network is entirely defined by the set of local
transition functions of its automata.

\subsection{Updating modes and transition graphs}
\label{sec_updating_modes}

To determine the possible \emph{behaviours} of a network, it remains to be
specified how automata states are updated over time. The most general point of
view consists in considering all possibilities. That is, assimilating networks
with state transition systems, in each configuration, $2^n-1$ transitions are
considered, one for each non-empty set of automata whose states can be
updated. More precisely, $\forall W \neq \emptyset \subseteq V$, we define the
update function $F_W: \{0,1\}^n \to \{0,1\}^n$ such that:
\begin{equation*}
  \forall x \in \{0,1\}^n, \forall i \in V,\ F_{W}(x)_i = \begin{cases}
    f_i(x) & \text{if } i \in W \text{,}\\ 
    x_i & \text{otherwise.}
  \end{cases}
\end{equation*}
Then, according to the most \emph{general updating mode}, the global network
behaviour is given by the \emph{general transition graph} $\mathcal{G}_g =
(\{0,1\}^n, T_g)$ where $T_g = \{(x, F_W(x))\ |\ x \in \{0,1\}^n,\ W \neq
\emptyset\, \subseteq V\}$~\cite{Noual2011a,Noual2011b,Noual2011c}. In this
graph which usually is a multigraph, arcs can be labelled by the set $W$ of
automata that are updated in the corresponding transition $(x, F_W(x))$. For the
sake of clarity, in the examples of this paper, arcs with identical extremities
are represented by a unique arc with several
labels. Figure~\ref{fig_asynchronous-general}~(top) depicts the general
transition graph of the network presented in Figure~\ref{fig_structure}.\medskip
\begin{figure}[t!]
  \centerline{
    \begin{tabular}{c}
      \includegraphics[scale=0.75]{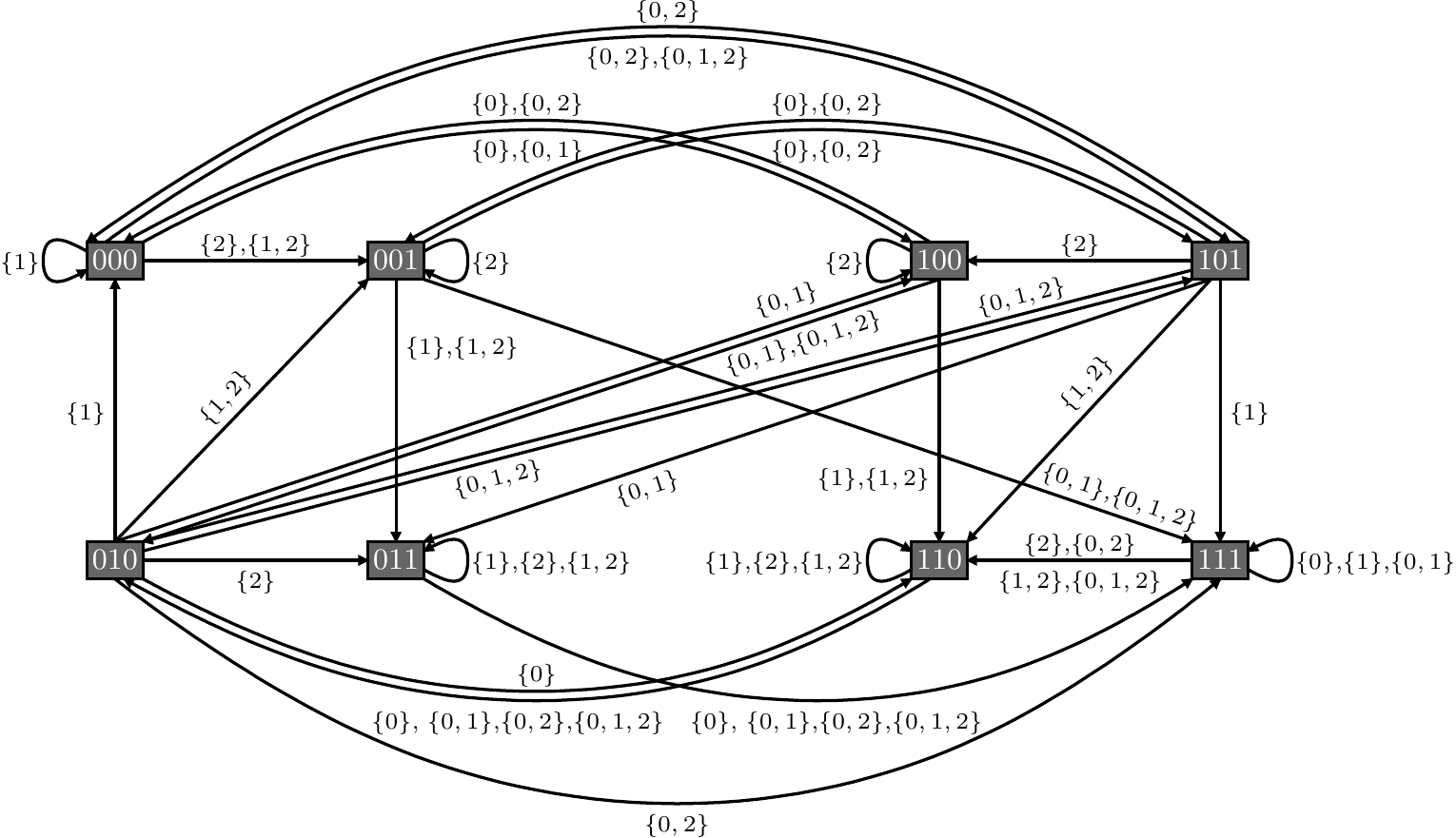}\\[2mm]
      \hline\\[-1mm]
      \includegraphics[scale=0.75]{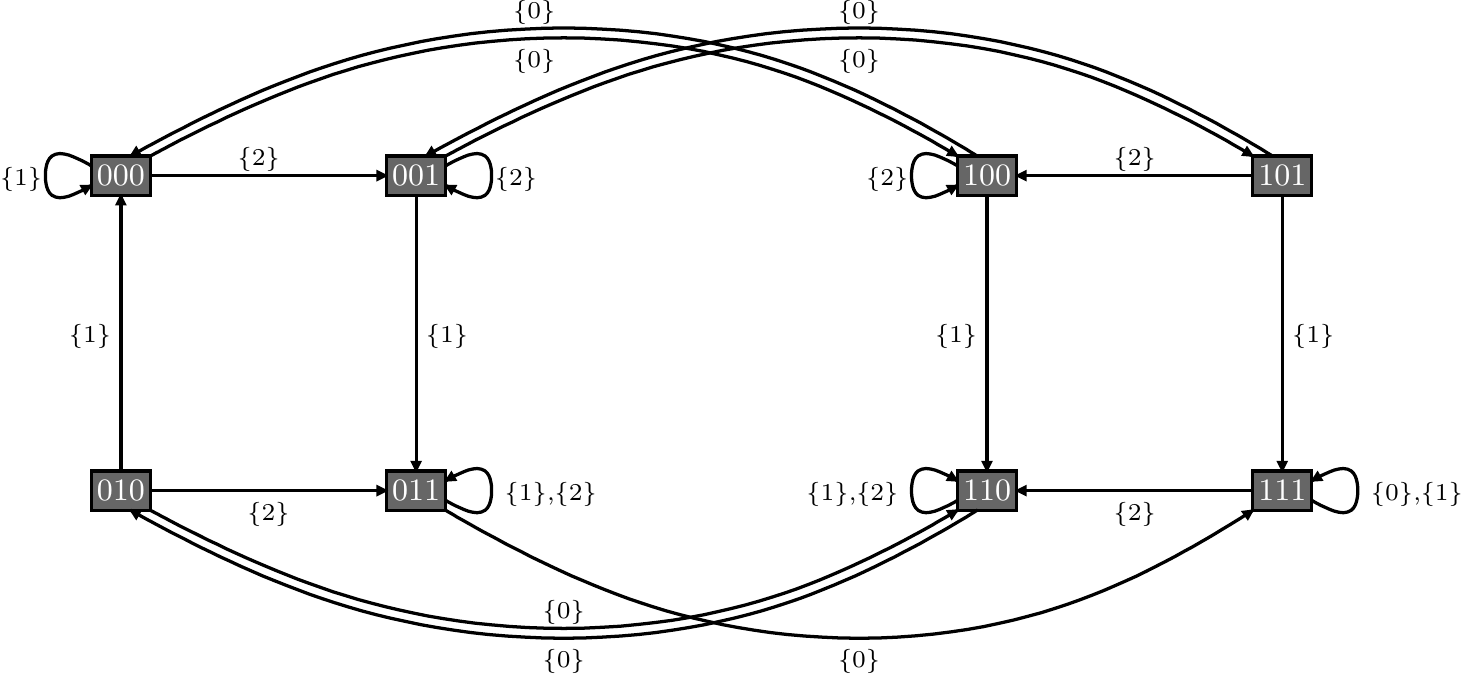}\\
    \end{tabular}
  }
  \caption{(top) General and (bottom) asynchronous transition graphs of the
    Boolean automata network of Figure~\ref{fig_structure}.}
  \label{fig_asynchronous-general}
\end{figure}

Transitions $(x, F_{i}(x))$ that only involve the update of one automaton $i \in
V$ are called \emph{asynchronous transitions}. Transitions $(x, F_W(x)), |W| >
1$ that involve the update of several are called \emph{synchronous transitions}.
The subgraph $\mathcal{G}_a = (\{0,1\}^n, T_a)$ of $\mathcal{G}_g$ whose set of
arcs $T_a = \{(x, F_{\{i\}}(x))\ |\ x \in \{0,1\}^n,\ i \in V\}$ equals the set
of \emph{asynchronous transitions} of the network is called the
\emph{asynchronous transition graph}. This graph defines the \emph{asynchronous
  updating mode} according to which, in each configuration, only $n$ transitions
are considered, one for each automaton that can be updated alone. This updating mode
has been widely used in studies of Thomas and his co-workers
in~\cite{Thomas1981,Thomas1991,Remy2003,Richard2004,Richard2007,Remy2008}.  An
illustration of an asynchronous transition graph is given in
Figure~\ref{fig_asynchronous-general}~(bottom).\medskip

Because both the general and the asynchronous transition graphs are very large
graphs, in some cases, to draw some intuitions, it may be necessary to restrict
our attention to the transitions that are allowed under a specific deterministic
updating schedule $u$.  This amounts to considering a transition graph
$\mathcal{G}_{u} = (\{0,1\}^n, T_{u})$ which is the graph of a function $F[u]:
\{0,1\}^n \to \{0,1\}^n$ ({\it i.e.}, $T_u = \{(x, F[u](x))\ |\ x \in
\{0,1\}^n\}$). This function has the following form: $F[u]= F_{W_{p-1}}\circ
\ldots \circ F_{W_1}\circ F_{W_0}$ where $p\in \mathbb{N}$ and $\forall k\leq
p,\ W_k\subseteq V$. It is called the \emph{global transition function}
associated to the updating schedule $u$, that updates simultaneously all
automata in $W_0$, then updates simultaneously all automata in $W_1$ \ldots This
second point of view has been adopted
in~\cite{Demongeot2008,Elena2008,Aracena2009,Aracena2010,Goles2010,Goles2011}
following the introduction of \emph{block-sequential updating schedules} by
Robert in~\cite{Robert1986,Robert1995} (see Figure~\ref{fig_block-sequential}).
Section~\ref{sec_xor} is set in similar lines. It focuses on the \emph{parallel
  updating mode} $\pi$ which consists in deterministically updating all network
automata at once in each network configuration. In this case, the global
transition function is $F[\pi] = F_V$ such that $\forall i\in V,\ F[\pi](x)_i =
f_i(x)$ and the network behaviour is considered to be described by the graph of
$F[\pi]$, that is, the transition graph $\mathcal{G}_{\pi} = (\{0,1\}^n,
T_{\pi})$ where $T_{\pi} = \{(x, F[\pi](x))\ |\ x \in \{0,1\}^n\}$.\medskip
\begin{figure}[t!]
  \centerline{ 
    {\footnotesize \begin{tabular}{ccccc}
        \includegraphics[scale=0.75]{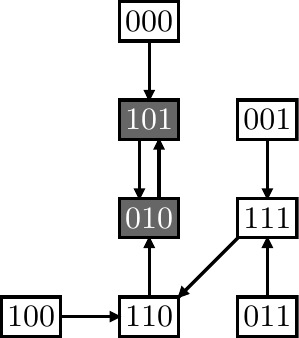} & \hspace*{3mm} &
        \includegraphics[scale=0.75]{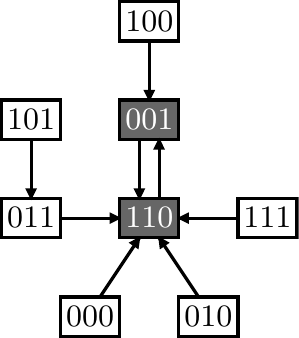} & \hspace*{3mm} &
        \includegraphics[scale=0.75]{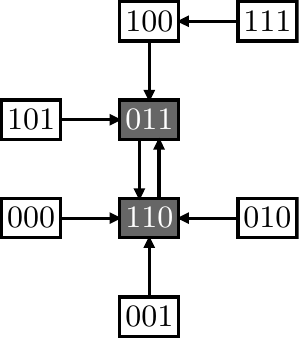}\\
        \centering $(a)$ & & \centering $(b)$ & & \centering $(c)$
      \end{tabular}}
  }
  \caption{Transition graphs of the Boolean automata network of
    Figure~\ref{fig_structure} associated to $(a)$ the parallel updating mode,
    $(b)$ a block-sequential updating schedule $u$ whose global transition
    function is $F[u]=F_{\{1,2\}}\circ F_{\{0\}}$ and $(c)$ a particular
    block-sequential updating schedule $s$, called sequential, whose global
    transition function is $F[u]=F_{\{1\}}\circ F_{\{2\}}\circ
    F_{\{0\}}$. Recurrent configurations appear in grey.}
  \label{fig_block-sequential}
\end{figure}

\subsection{Dynamical behaviours and non-monotony}
\label{sec_non-monotony}

Let $N$ be an arbitrary Boolean automata network. Consider any updating mode $u$
among those defined above and let $\mathcal{G}_u$ be the corresponding
transition graph. Let $x \in \{0,1\}^n$ be a configuration of $N$. We call
\emph{trajectory} of $x$ any path in $\mathcal{G}_u$ that starts in
$x$. Terminal strongly connected components of $\mathcal{G}_u$ are called the
\emph{attractors} of $N$ and constitute the asymptotic behaviours of $N$. Their
size equals the number of configurations that they contain.  Configurations that
belong to an attractor are called \emph{recurrent configurations}. Attractors of
size $1$ (resp. of size strictly greater than $1$) are called \emph{stable
  configurations} (resp. \emph{stable oscillations}). When $\mathcal{G} =
\mathcal{G}_u$ is the transition graph associated to a deterministic updating
schedule $u$, stable configurations correspond to fixed points of the global
transition function $F[u]$ and stable oscillations of size $p$, which are rather
called \emph{limit cycles} of \emph{period} $p$ in this case, correspond to
oriented cycles in $\mathcal{G}_u$. As an example,
Figure~\ref{fig_block-sequential} shows that the network of
Figure~\ref{fig_structure} admits one unique attractor, a limit cycle of period
$2$, under any of the three deterministic updating schedules considered. The
precise definition of this limit cycle, however, differs in each case.  In
particular, as proven in~\cite{Aracena2009,Goles2008}, no configurations besides
stable configurations are recurrent under the parallel updating schedule as well
as under the sequential updating schedule which updates one automaton at the
time been.  Furthermore, Figure~\ref{fig_asynchronous-general} shows that the
same network admits one unique attractor, a stable oscillation of size $8$,
indifferently when it is subjected to the asynchronous or general updating
modes.  \medskip

By analogy with continuous functions, the local transition function $f_i$ of an
automaton $i \in V$ is said to be \emph{locally monotone} in $j \in V$ if,
either:
\begin{multline*}
 \forall x = (x_0, \ldots, x_{n-1}) \in \{0,1\}^n,\\ f_i(x_0, \ldots, x_{j-1}, 0, x_{j+1}, \ldots, x_{n-1}) \leq f_i(x_0, \ldots,
  x_{j-1}, 1, x_{j+1}, \ldots, x_{n-1})
\end{multline*}
or:
\begin{multline*}
\forall x = (x_0, \ldots, x_{n-1}) \in \{0,1\}^n,\\  f_i(x_0, \ldots, x_{j-1}, 0, x_{j+1}, \ldots, x_{n-1}) \geq f_i(x_0, \ldots,
  x_{j-1}, 1, x_{j+1}, \ldots, x_{n-1}) \text{.}
\end{multline*}
In other terms, $f_i$ is locally monotone in $j$ if, in the conjunctive normal
form of $f_i(x)$, either only $x_j$ appears or only $\neg x_j$ does.  The
function $f_i$ is said to be locally monotone or simply monotone if it is
locally monotone in \emph{all} $j\in V$.  It is said to be \emph{non}
(\emph{locally}) \emph{monotone} otherwise. In this latter case, there exists $j
\in V$ such that in some configurations, the state of $i$ tends to imitate that
of $j$ and in some other configurations, on the contrary, the state of $i$ tends
to negate that of $j$. When all functions $f_i$,  $i \in V$, are monotone,
the network is said to be \emph{monotone} itself. Otherwise, if at least one
local transition function is non-monotone, the network is said to be
\emph{non-monotone}.

\section{Motivations}
\label{sec_motivations}

To put forward the importance of studying non-monotony in discrete models of
regulation networks, let us first recall the fundamental concept of genetics
establishing that a gene is a portion of the \textsc{dna} which is transcribed
into a m\textsc{rna} (the gene is then said to be expressed) that is itself
translated into one or several proteins, called the \emph{products} of that
gene. Because proteins can influence the transcription and translation stages,
genes have the possibility of interacting with one another through their
products. Further, because the effect of a protein may depend on its
concentration in the cell, genes may have different effects on one another. If
gene \textsf{g$_{\textsf{j}}$} influences the expression of gene
\textsf{g$_{\textsf{i}}$} via one of its protein products \textsf{p}, then it
may do so differently according to the concentration of \textsf{p} in the cell.
As an illustration, let us consider the infection of a bacterium
\emph{Escherichia coli} by a phage $\lambda$~\cite{Lederberg1950} at high
temperature. The genetic regulations that allow a phage $\lambda$ to enter its
lysogenic and lytic cycles\footnote{The lysogenic cycle of a phage $\lambda$ is
  the stage where its genome is inserted in the genome of the bacterium. Its
  lytic cycle is the stage where it replicates, leading \emph{in fine} to the
  death of the bacterium.}  involve two genes, \textsf{Cro} and
\textsf{cI}. Indeed, at high temperature, \textsf{Cro} influences itself and the
nature of this influence is different according to the concentration of its
protein product \textsf{p$_{\textsf{Cro}}$}. In~\cite{Eisen1970,Thieffry1995},
the authors show that if the \textsf{p$_{\textsf{Cro}}$} concentration in the
cell is low (resp. high), \textsf{Cro} tends to activate (resp. inhibit) its own
expression, whatever the \textsf{p$_{\textsf{cI}}$} concentration is. This
induces an increase (resp. a decrease) of the \textsf{p$_{\textsf{Cro}}$}
concentration. Supposing that automata networks are reasonable models of genetic
networks, such duality in the influence of one gene \textsf{g$_{\textsf{j}}$} on
the state of the same or another gene \textsf{g$_{\textsf{i}}$} corresponds
precisely to the formal notion of non-monotony, specifically, that of $f_i(x)$
with respect to $x_j$.\medskip

In the light of some recent developments \cite{Noual2011a,Delaplace2011}, the
study of non-monotony in discrete networks also seems pertinent from a
different, mathematical standpoint. Indeed, first, let us consider the problem
of modularity in gene regulation networks which is essential in the context of
biology. Modules are informally defined as independent groups of interacting
genes. More precisely, they involve minimal sets of genes (or biobricks) that
own independent behaviours specific to real biological functions. Until now, the
notion of modularity used in the literature relies on structural
parameters~\cite{Milo2002,Rives2003,Gagneur2004}. For instance, modules are
often defined simply as strongly connected components. However, although such
structural definitions are natural, they can only lead to structural results,
necessarily failing to reveal biobricks. In~\cite{Delaplace2011}, a new notion
of modularity is introduced for asynchronous multi-state automata networks
considered as discrete models of gene regulation networks. It is supported by
dynamical considerations that take into account the asymptotic behaviours of
these networks. To detail this, let us consider a network $N$ whose associated
interaction graph is $G = (V,A)$. A modular organisation of a network is defined
in~\cite{Delaplace2011} by an ordered partition%
\footnote{An \emph{ordered partition} $(S_0, \ldots, S_{k-1})$ of an arbitrary
  set $S$, is defined by $k \leq |S|$ non-empty ordered subsets $S_i\subset S$
  such that $S = \biguplus_{i<k} S_i$ (\emph{i.e.}, $\forall i, j<k,\ i\neq j
  \Rightarrow S_i \cap S_j = \emptyset$ and $S = \bigcup_{i<k} S_i$).} $(V_0,
\ldots, V_{k-1})$ of the set of network automata. 
The subgraph of $G$ induced by $V_i$ is said to be a module of $N$ if the
ordered sets $V_i$, $i<k$, satisfy $\mathcal{E} = \mathcal{E}_0 \oslash \ldots
\oslash \mathcal{E}_{k-1}$, where $\mathcal{E}_W$, $W\subseteq V$ denotes the
asymptotic behaviour of the subnetwork of $N$ induced by $W$ and $\oslash$ is a
composition operator defined in~\cite{Delaplace2011}.  In this context, the
authors of~\cite{Delaplace2011} show that any topological ordering\footnote{If
  $G = (V,A)$ is a digraph, a \emph{topological ordering} of $G$ is a linear
  ordering of the vertices in $V$ such that, $\forall (i,j) \in A$, $i$ comes
  before $j$ in the ordering.} on the set of strongly connected components of
the underlying structure of a network $N$ does indeed define a modular
organisation of $N$. However, in the general case, strongly connected components
are not minimal modules and thus do not allow to reveal biobricks. In some
cases, they can be decomposed into smaller independent sub-modules. What is
interesting is that for Boolean automata networks, all encountered examples of
non-decomposable strongly connected components involve non-monotony.\medskip

In different lines, the importance of non-monotony can also be seen by adapting
some results presented in~\cite{Noual2011a}. This produces
Proposition~\ref{prop_asynchronous-general} below which relates non-monotony to
non-trivial changes in the dynamical behaviours of a network when synchronism
is added to its asynchronous behaviour. In this proposition, a transition $(x,
F_W(x))$ is said to be \emph{sequentialisable} if there exists a series of
consecutive asynchronous transitions $(x, F_{\{i\}}(x)), (F_{\{i\}}(x),
F_{\{j\}} \circ F_{\{i\}}(x)), \ldots$ that start in $x$ and end in $F_W(x)$.
\begin{figure}[t!]
  \centerline{\small
    \begin{tabular}{c}
      \includegraphics[scale=0.8]{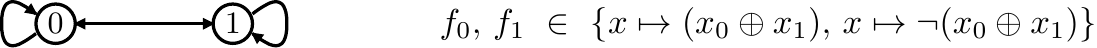}\\[2mm]
      \hline
    \end{tabular}
  }\vspace*{4mm}
  \centerline{
    \begin{tabular}{c|c}
      \includegraphics[scale=0.75]{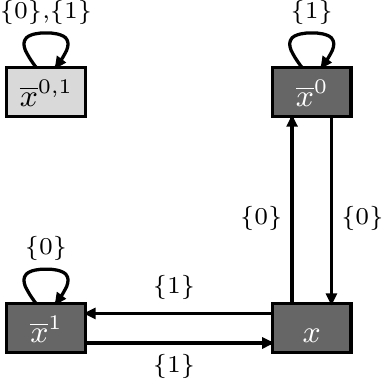}~~~~~ &
      ~~~~~\includegraphics[scale=0.75]{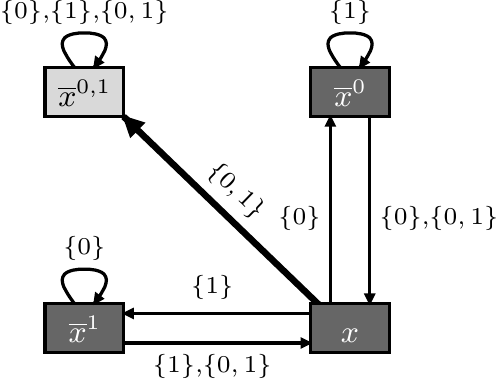}
    \end{tabular}
  }
  \caption{Top panel: Generic description of the four smallest Boolean automata
    networks that satisfy the conditions of
    Proposition~\ref{prop_asynchronous-general}. Bottom panels: (left) generic
    (see proof of Proposition~\ref{prop_asynchronous-general}) asynchronous and
    (right) general transition graphs of these networks.}
  \label{fig_motiv}
\end{figure}
\begin{prop}
  \label{prop_asynchronous-general} 
  The smallest Boolean automata networks that have non-sequentialisable
  synchronous transitions and significantly different limit behaviours under the
  asynchronous and general updating modes are non-mono\-tone.
\end{prop}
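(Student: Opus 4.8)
The plan is to split the argument into a \emph{reduction} — relating the two hypotheses to a single combinatorial event — and a finite check at small sizes, witnessed by the networks of Figure~\ref{fig_motiv}. First I would dispose of the trivial cases: a network of size $1$ has no synchronous transitions, so the minimal size is at least $2$. Then I would make the comparison between the asynchronous and general updating modes precise. Since $T_a\subseteq T_g$, every terminal strongly connected component of $\mathcal{G}_g$ contains a terminal strongly connected component of $\mathcal{G}_a$ (restricted to such a component, the asynchronous sub-dynamics cannot leave it). Hence passing from $\mathcal{G}_a$ to $\mathcal{G}_g$ can only merge or destroy attractors, and ``significantly different limit behaviours'' amounts to: some attractor $\mathcal{A}$ of $\mathcal{G}_a$ fails to be terminal in $\mathcal{G}_g$, i.e.\ there is a synchronous transition $(x,F_W(x))$ with $x\in\mathcal{A}$ and $F_W(x)\notin\mathcal{A}$. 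Such a transition is automatically non-sequentialisable, since $\mathcal{A}$ is terminal in $\mathcal{G}_a$ and so every asynchronous path issued from $x$ stays in $\mathcal{A}$ and never reaches $F_W(x)$. Thus the two hypotheses are realised together precisely by these ``attractor-escaping'' synchronous transitions, and it suffices to prove that the smallest networks admitting one are non-monotone.

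Next I would extract the local obstruction. Fix such an escaping transition $(x,F_W(x))$ with $W$ minimal. As $F_{\{i\}}(x)\in\mathcal{A}$ for every $i$, every automaton updated by $W$ actually changes state and $|W|\geq 2$; pick $i,j\in W$, so $f_i(x)=\neg x_i$, $f_j(x)=\neg x_j$, and $\overline{x}^i,\overline{x}^j\in\mathcal{A}$. Then $f_j(\overline{x}^i)=x_j$ and $f_i(\overline{x}^j)=x_i$: if instead $f_j(\overline{x}^i)=\neg x_j$, then the asynchronous path $x\to\overline{x}^i\to\overline{x}^{\{i,j\}}$ would reach a configuration outside $\mathcal{A}$ starting from inside $\mathcal{A}$, which is impossible (for $|W|=3$ the same reasoning applied to every ordering of the three updates yields an analogous reversal). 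So each of $i$ and $j$ reverses its target state when the other flips. The key point is that this reversal, by itself, is compatible with $f_j$ being locally monotone in $i$; the contradiction appears only once the recurrence of $x$ under the asynchronous dynamics is used, to locate a second configuration of $\mathcal{A}$ in which $f_j$ answers coordinate $i$ with the opposite sign. For the smallest sizes this is a finite verification: size $2$ is immediate, since a size-$2$ asynchronous attractor is a $2$-cycle flipping one fixed automaton and one checks directly that its synchronous transitions return into the cycle, so no escape exists; at the size $n_0$ of the examples below, the monotone networks are enumerated up to relabelling of the automata and global complementation of states, and none carries an attractor-escaping synchronous transition.

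Finally I would exhibit the witnesses: the four networks generically described in Figure~\ref{fig_motiv}, whose asynchronous and general transition graphs are drawn there. Each has an attractor-escaping (hence non-sequentialisable) synchronous transition and visibly distinct asymptotic behaviours, and each is non-monotone by inspection of its local functions. Combined with the impossibility results above, this shows that all networks of minimal size satisfying the two hypotheses are non-monotone. I expect the finite case analysis of the previous paragraph — ruling out every monotone network at the threshold size — to be the main obstacle, precisely because the local reversal condition is on its own consistent with monotony, so the contradiction must be squeezed out of the recurrence structure of $\mathcal{A}$; a secondary subtlety is to fix the formal meaning of ``significantly different limit behaviours'' so that the equivalence established in the reduction step is exact.
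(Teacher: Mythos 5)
Your reduction step is where the argument breaks. You formalise ``significantly different limit behaviours'' as: some attractor $\mathcal{A}$ of $\mathcal{G}_a$ is not terminal in $\mathcal{G}_g$, i.e.\ there is an attractor-escaping synchronous transition, and you then plan to show that the smallest networks admitting such a transition are non-monotone. That last statement is false, so the enumeration you defer to (``none carries an attractor-escaping synchronous transition'') cannot succeed. Concretely, take $n=2$ with $f_0=f_1:x\mapsto\neg(x_0\wedge x_1)$. Both functions are locally monotone (non-increasing in each variable), so the network is monotone. Its asynchronous attractor is the stable oscillation $\{(0,1),(1,0),(1,1)\}$, with $(0,0)$ transient; the synchronous transition $\bigl((1,1),F_{\{0,1\}}(1,1)\bigr)=\bigl((1,1),(0,0)\bigr)$ leaves this attractor and is non-sequentialisable, since asynchronously $(1,1)$ only reaches $\{(1,1),(0,1),(1,0)\}$. (The dual \textsc{nor}-\textsc{nor} network behaves symmetrically.) So under your formalisation there are monotone witnesses at the minimal size, and the local ``reversal'' conditions you extract are, as you yourself note, consistent with monotony -- the \textsc{nand} example realises them. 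What saves the proposition in the paper is a strictly stronger reading of ``significantly different'': in the paper's proof the escaping synchronous transition must lead out of the asynchronous stable oscillation to a configuration from which that oscillation is not recovered (in the minimal witnesses, to the stable configuration $(0,0)$), so that the attractor structure genuinely changes rather than merely absorbing a formerly transient configuration, as happens for \textsc{nand}-\textsc{nand} where $(0,0)\to(1,1)$ synchronously and the general-mode attractor simply swallows $(0,0)$. It is exactly this extra constraint, imposed via the required shape of the general transition graph in Figure~\ref{fig_motiv}, that pins down the values of $f_0$ and $f_1$ on the remaining configurations and leaves only $x\mapsto x_0\oplus x_1$ and $x\mapsto\neg(x_0\oplus x_1)$, hence non-monotony.

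Two further problems. First, your explicit check at size $2$ (``a size-$2$ asynchronous attractor is a $2$-cycle flipping one fixed automaton \ldots\ so no escape exists'') is wrong and would exclude the paper's own witnesses: the four minimal networks have size $2$, their asynchronous attractor is a stable oscillation on \emph{three} configurations, and the synchronous transition from $(1,1)$ (resp.\ $(0,0)$) does escape it; size $2$ is the minimal size, not a case to be discarded. Second, even granting your set-up, the decisive step -- ruling out monotone networks at the threshold size -- is only announced, not carried out, and as shown above it cannot be carried out without first strengthening the formalisation of the ``significantly different'' hypothesis to match the paper's intent.
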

\begin{proof} 
  Let us find the smallest network $N$ with a non sequentialisable synchronous
  transition. Obviously, this network needs to have more than one automaton and
  if it has size $2$, then, to have a non-sequentialisable synchronous
  transition, its general transition graph needs to contain a subgraph of the
  following form:\\[2mm] \centerline{
    \includegraphics[scale=0.75]{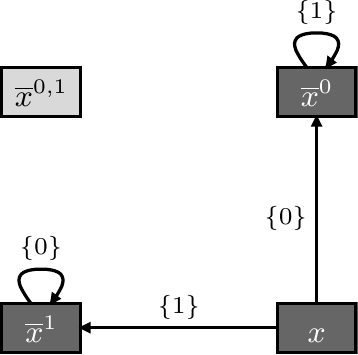} }\\[2mm] where
  $\overline{x}^{\, i,j} = \overline{x}^{\, \{i,j\}} =
  \overline{\overline{x}^i}^j$ (see
  Equation~\ref{eq_notations_basiques}). Moreover, to have significantly
  different asymptotic behaviours under the asynchronous and general updating
  modes, the synchronous transition $(x, \overline{x}^{\, i,j})$ must go out of
  a set of configurations that induces a stable oscillation under the
  asynchronous updating mode. Thus, the general transition graph of $N$ must
  have the form of the general transition graph pictured in the bottom right
  panel of Figure~\ref{fig_motiv}.  Then, only two functions $f_0$ are
  possible. If in configuration $x$ above, $x_0 = 1$, then, $f_0(x): x \mapsto
  x_0 \oplus x_1$ where $\oplus$ denotes the \textsc{xor}
  connector\footnote{$\forall a,b\in \{0,1\},\, a\oplus b= (a\land \neg b)\vee
    (\neg a \land b)$.}. If in configuration $x$ above, $x_0 = 0$, then $f_0(x)
  : x \mapsto \neg ( x_0 \oplus x_1)$. The function $f_1$ is defined
  similarly. In conclusion, there are four smallest networks satisfying the
  properties of Proposition~\ref{prop_asynchronous-general}. They have size $2$
  and their interactions graph equal the graph pictured in the top panel of
  Figure~\ref{fig_motiv}. Their two local interaction functions $f_0$ and $f_1$
  either equal $x \mapsto x_0 \oplus x_1$ or $x \mapsto \neg ( x_0 \oplus x_1)$.
\end{proof}
The reasons evoked in this section that led us to focus on non-monotony in
automata networks emphasise the apparent importance of non-monotone
functions. In the next section, to initiate an analysis of the behaviours of
general non-monotone networks and develop some intuition in this direction, we
focus on a specific class of non-monotone networks, namely \textsc{xor} circulant
networks, and study some of their dynamical properties.\medskip

\section{{\footnotesize XOR} circulant networks}
\label{sec_xor}

Before we present some results on the trajectorial and asymptotic dynamical
behaviours of \textsc{xor} circulant networks, let us first introduce some
definitions and preliminary properties in relation to these.\medskip

\subsection{Definitions and basic properties}

A \emph{circulant matrix} $\mathcal{C}$ is a matrix of order $n$ whose
$i^{\text{th}}$ row vector $\mathcal{C}_i$ ($i<n$) is the right-cyclic
permutation with offset $i$ of its first row vector $\mathcal{C}_0$ so that
$\mathcal{C}$ has the following form:
\begin{equation*}
  \mathcal{C} = \begin{pmatrix}
    c_0 & c_1 & c_2 & \ldots & c_{n-1}\\
    c_{n-1} & c_0 & c_1 & \ldots & c_{n-2}\\
    c_{n-2} & c_{n-1} & c_0 & \ldots & c_{n-3}\\
    \vdots & \vdots & \vdots & \ddots & \vdots \\
    c_1 & c_2 & c_3 & \ldots & c_{0}\\
  \end{pmatrix}\text{.}
\end{equation*}
For any integer $k \geq 2$, a \emph{$k$-\textsc{xor} circulant network} of size
$n \geq k$ is a Boolean automata network with $n$ automata that can be numbered
so that the following four properties are satisfied: $(i)$ the adjacency matrix
$\mathcal{C}$ of the network interaction graph $G = (V, A)$ ({\it i.e.}, the
$n\times n$ matrix $\mathcal{C}$ defined by $\forall i,j\in V,\,
\mathcal{C}_{i,j}=1\iff (j,i)\in A$), called the \emph{interaction matrix} for
short, is a circulant matrix, $(ii)$ each row $\mathcal{C}_{i}$, $i \in V$, of
this matrix contains exactly $k$ non-null coefficients (\emph{i.e.}, $\forall i
\in V,\ \sum_{j \in V} \mathcal{C}_{i,j} = deg^-_G(i)=k$), $(iii)$
$\mathcal{C}_{0,n-1}=c_{n-1}=1$ and $(iv)$ the local transition function $f_i$
of any automaton $i \in V$ is a \textsc{xor} function:
\begin{equation*}
  \label{eq_flt}
  \forall x\in \{0,1\}^n,\ f_i(x) = \bigoplus_{j\in V} \mathcal{C}_{i,j} \cdot x_j = 
  \sum_{j\in V} \mathcal{C}_{i,j} \cdot x_j~[2]\text{,}
\end{equation*}
where, for any integers $a$ and $b$, $a\, [b]$ stands for $a~(\text{mod}~b)$.
In the sequel, for the sake of simplicity, \textsc{xor} circulant networks are
considered to be subjected to the \emph{parallel updating mode} so that if $x =
x(t) \in \{0,1\}^n$ is the network configuration at time step $t \in
\mathcal{T}$, then the network configuration at time step $t + 1$ equals $x(t+1)
= F(x) = \mathcal{C} \cdot x$ (where operations are supposed to be taken modulo
$2$). Thus, a \textsc{xor} circulant network is completely defined by its
interaction graph $G = (V, A)$ or by its interaction matrix
$\mathcal{C}$. Figure~\ref{fig_2-xor-network} pictures two examples of
$2$-\textsc{xor} circulant networks of size $5$. Let us note that one of the
four networks satisfying Proposition~\ref{prop_asynchronous-general} and defined
in Figure~\ref{fig_motiv} is also a $2$-\textsc{xor} circulant network of size
$2$.\medskip
\begin{figure}[t!]
  \centerline{
    {\footnotesize \begin{tabular}{c}
        \includegraphics[scale=0.8]{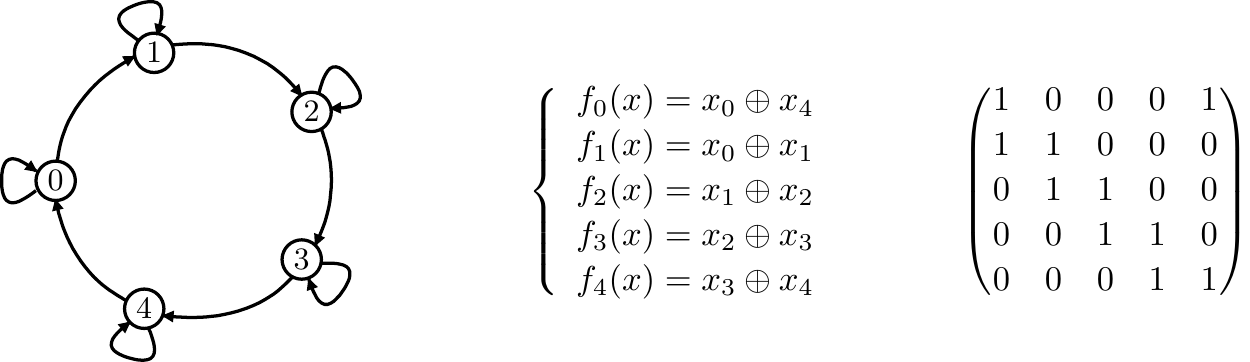}\\[2mm]
        \hline\\[-0.5mm]
        \includegraphics[scale=0.8]{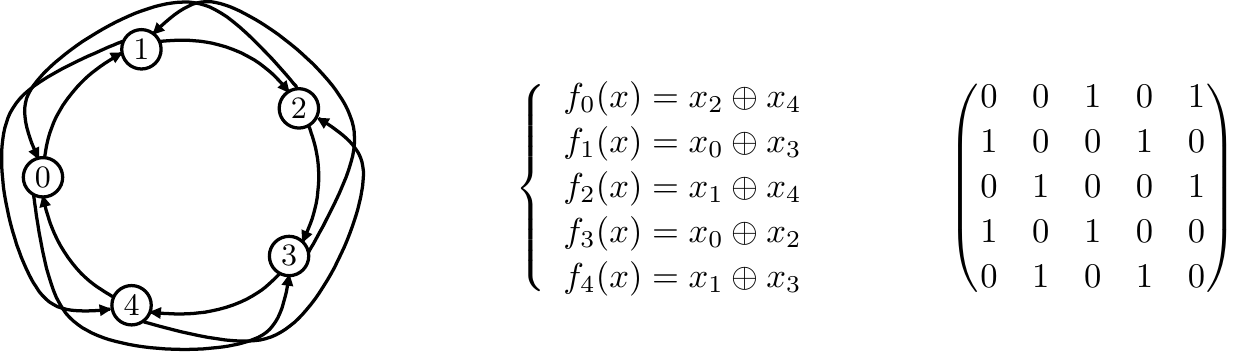} 
      \end{tabular}}
  }
  \caption{Interaction graphs, local transition functions and interaction
    matrices of two $2$-\textsc{xor} circulant networks of sizes $5$.}
  \label{fig_2-xor-network}
\end{figure}

Let us note that by point $(iii)$ in their definitions, $k$-\textsc{xor}
circulant networks have Hamiltonian circuits underlying their structures. When
automata are numbered as suggested in this definition, these circuits are
composed of the set of arcs $\{(i,i+1~[n])\ |\ i\in V\}\subseteq A$.  More
generally, it can be shown that each non-null coefficient
$c_j=\mathcal{C}_{0,j}$ of a circulant interaction matrix $\mathcal{C}$ induces
$gcd\, (n,j)$ independent circuits of length $n/gcd\, (n,j)$ in the interaction
graph $G$ of the corresponding network. Now, it has been shown that to have
several stable configurations and/or stable oscillations, Boolean automata
networks need to have circuits underlying their interaction
graphs~\cite{Thomas1981,Richard2007,Richard2010}. Thus, the presence of circuits
underlying the structures of $k$-\textsc{xor} circulant networks guarantees that
these networks have interesting, non-trivial dynamical behaviours.\medskip

Any $k$-\textsc{xor} circulant network $N$ can be seen in terms of
\emph{cellular automata}. Indeed, if $N$ has size $n$ and interaction graph $G =
(V, A)$, it can be modelled by the finite one-dimensional cellular automaton
that has $n$ cells assimilated to the $n$ automata of $N$ and that satisfies
what follows. The \emph{neighbourhood} $\mathcal{N}$ of a cell $i \in V$ equals
the in-neighbourhood of automaton $i$ in $N$: $\mathcal{N} = \{j \in V\ |\ (j,
i) \in A\}$. The local rule $\gamma: \{0,1\}^{|\mathcal{N}|} \to \{0,1\}$ of the
cellular automaton is defined similarly to the local transition functions of
$N$: $\gamma((x_\ell)_{\ell \in \mathcal{N}}) = \bigoplus_{\ell \in \mathcal{N}}
x_{\ell}$. In the sequel, we use this formalisation to exploit tools drawn from
the theory of cellular automata. Thus, if $x = x(0) \in \{0,1\}^n$ is an initial
configuration of $N$, we consider the corresponding \emph{space-time diagram},
that is, the grid of $\{0,1\}^n \times \mathcal{T}$ whose line $t \in
\mathcal{T}$ represents $x(t)$, \emph{i.e.}, the configuration of $N$ at time
step $t$. The \emph{trace} of cell or automaton $i \in V$ then corresponds to
column $i$ of this grid, that is, to the sequence $(x_i(t))_{t \in
  \mathcal{T}}$. Also, for an arbitrary configuration $x \in \{0,1\}^n$ and an
automaton $i \in V$, $S_i(x)$ denotes the configuration that satisfies $\forall
j \in V,\ S_i(x)_j = x_{2i-j~[n]}$. It is called the \emph{symmetric of $x$}
with respect to $i$. We write $\widetilde{N}$ to denote the \emph{symmetric of
  $N$}, that is, the $k$-\textsc{xor} circulant network whose interaction matrix
is $\mathstrut^t\mathcal{C}$. In the sequel, by default, $\mathcal{N}^-(i)$
(resp. $\mathcal{N}^+(i)$) denotes the \emph{in-neighbourhood} (resp. the
\emph{out-neighbourhood}) of automaton $i$ in $N$ and
$\widetilde{\mathcal{N}}^-(i)$ (resp. $\widetilde{\mathcal{N}}^+(i)$) denotes
its in-neighbourhood (resp. its out-neighbourhood) in $\widetilde{N}$. This way,
for any two automata $i, j \in V$, $j \in \mathcal{N}^-(i) \iff j \in
\widetilde{\mathcal{N}}^+(i)$. The global transition function of $\widetilde{N}$
is denoted by $\widetilde{F}$ if that of $N$ is denoted by $F$. One last
convention that is used throughout the sequel is the following. By default,
unless $N$ is the symmetric of another $k$-\textsc{xor} circulant network that
was introduced before, its automata are supposed to be numbered as suggested
above in the definition of $k$-\textsc{xor} circulant networks so that
$c_{n-1}=\mathcal{C}_{0,n-1}=1$. This way, $\{(i,i+1~[n])\ |\ i\in V\}\subseteq
A$ defines a Hamiltonian circuit in the structure of $N$ and $\{(i+1~[n], i)\ |\
i\in V\}\subseteq A$ defines a Hamiltonian circuit in the structure of its
symmetric $\widetilde{N}$.\medskip

To end this paragraph, we list some basic properties of \textsc{xor} circulant
networks that follow directly from the definitions of \textsc{xor} functions and
circular matrices:
\begin{prop}
  \label{prop_basic}
  ~
  \begin{enumerate}
  \item[1.]\label{prop_number} The number of $k$-\textsc{xor} circulant networks
    of size $n$ equals $\binom{k-1}{n-1}$.
  \end{enumerate}
  Any $k$-\textsc{xor} circulant network of size $n$ satisfies the following
  properties:
  \begin{enumerate}
  \item[2.]\label{prop_0} Configuration $(0, \ldots, 0)$ is a stable
    configuration.
  \item[3.]\label{prop_1} Configuration $(1,\ldots,1)$ is an antecedent of
    $(0,\ldots,0)$ if $k$ is even or a stable configuration if $k$ is odd.
  \item[4.]\label{prop_isomorphism} The trajectory of a configuration $x$ is
    isomorphic to that of any configuration $y$ which is a circular permutation
    of $x$.
  \end{enumerate}
\end{prop}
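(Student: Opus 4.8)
All four items follow quickly from the definitions of \textsc{xor} functions and of circulant matrices, so the plan is simply to organise four short verifications.

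For item~1, I would start from the observation that, once the automata are numbered as prescribed by conditions $(i)$--$(iii)$, a $k$-\textsc{xor} circulant network of size $n$ is entirely determined by the first row $\mathcal{C}_0 = (c_0, \ldots, c_{n-1})$ of its interaction matrix: all other rows are the right-cyclic shifts of $\mathcal{C}_0$ by~$(i)$, and the local functions are then forced by~$(iv)$ to be the \textsc{xor} functions read off $\mathcal{C}$. By~$(ii)$ this row carries exactly $k$ ones, and by~$(iii)$ one of them sits in position $n-1$; hence the choice reduces to placing the remaining $k-1$ ones among the $n-1$ positions $0, \ldots, n-2$. Since conversely every such placement yields a legitimate network and distinct placements yield distinct interaction matrices (the support of row $i$ is recoverable from $f_i$, because $f_i$ depends effectively on exactly the corresponding variables), the number of networks is the number of $(k-1)$-subsets of an $(n-1)$-set, namely $\binom{n-1}{k-1}$.

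Items~2 and~3 are one line each with the local rule $f_i(x) = \sum_{j \in V} \mathcal{C}_{i,j}\, x_j~[2]$. On $x = (0, \ldots, 0)$ every summand vanishes, so $F(0, \ldots, 0) = (0, \ldots, 0)$ and $(0, \ldots, 0)$ is a stable configuration. On $x = (1, \ldots, 1)$ one gets $f_i(x) = \sum_{j \in V} \mathcal{C}_{i,j}~[2] = deg^-_G(i)~[2] = k~[2]$ for every $i$ by~$(ii)$, so $F(1, \ldots, 1) = (0, \ldots, 0)$ when $k$ is even (hence $(1, \ldots, 1)$ is then an antecedent of $(0, \ldots, 0)$) and $F(1, \ldots, 1) = (1, \ldots, 1)$ when $k$ is odd (hence $(1, \ldots, 1)$ is then stable).

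For item~4, the idea is that a cyclic shift of the coordinates is an automorphism of the parallel transition graph $\mathcal{G}_\pi$. Let $\sigma : \{0,1\}^n \to \{0,1\}^n$ denote such a shift; over $\mathbb{F}_2$ it is multiplication by a cyclic permutation matrix $P$, and since $\mathcal{C}$ is circulant it commutes with $P$, whence $F \circ \sigma = \sigma \circ F$. Therefore $\sigma$ sends the trajectory $x, F(x), F^2(x), \ldots$ of $x$ bijectively onto the trajectory $\sigma(x), F(\sigma(x)), F^2(\sigma(x)), \ldots$ of $\sigma(x)$ and preserves arcs, which is exactly the claimed isomorphism of trajectories; iterating $\sigma$ (note $\sigma^n = \mathrm{id}$) covers an arbitrary circular permutation $y$ of $x$. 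I do not expect a genuine obstacle in any of this; the only point that deserves a little care is the counting bijection of item~1 -- i.e.\ confirming that conventions $(i)$--$(iii)$ really do put each network in one-to-one correspondence with a single admissible first row, so that nothing is double-counted or missed.
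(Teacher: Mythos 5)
Your verification is correct, and it matches the paper exactly in spirit: the paper offers no explicit proof, stating only that these properties ``follow directly from the definitions of \textsc{xor} functions and circular matrices'', and your four short checks (first-row counting for item~1, evaluating the \textsc{xor} rule on the all-$0$ and all-$1$ configurations for items~2 and~3, and the commutation of $F$ with a cyclic coordinate shift via circulance of $\mathcal{C}$ for item~4) are precisely that direct verification. One remark: your count $\binom{n-1}{k-1}$ is clearly the intended value; the paper's printed $\binom{k-1}{n-1}$ has the arguments in the opposite order (read literally it would vanish for $n>k$), so your derivation in fact corrects a notational slip rather than conflicting with the statement.
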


\subsection{Results}

\subsubsection{General $k$-\textsc{xor} circulant networks}

First, in this paragraph, we concentrate on general $k$-\textsc{xor} circulant
networks and exploit the cellular automata formalisation presented above to
derive some features of the dynamical behaviours of these networks.
\begin{lem}
  \label{mask}
  Let $N$ be a $k$-\textsc{xor} circulant network of size $n$ with automata set
  $V$ and symmetric global transition function $\widetilde{F}$. For any
  automaton $i\in V$, let $M_i(t)$, $t \in \mathcal{T}$, denote the set of
  automata which have state $1$ in configuration
  $\widetilde{F}^t(\overline{0}^i)$.  Then, $\forall x(0)\in \{0,1\}^n,\ \forall
  t \in \mathcal{T},\ x_i(t) = \bigoplus_{j \in M_i(t)} x_j(0)$.
\end{lem}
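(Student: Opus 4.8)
The plan is to exploit the linearity of XOR networks together with the duality between a network $N$ and its symmetric $\widetilde N$. Since every local transition function is a XOR function, the global transition function $F$ of $N$ is $\mathbb{F}_2$-linear: $F(x) = \mathcal{C}\cdot x$ with all arithmetic modulo $2$. Hence $F^t(x) = \mathcal{C}^t \cdot x$, and each coordinate is a linear form in the initial states: $x_i(t) = \sum_{j\in V} (\mathcal{C}^t)_{i,j}\, x_j(0)~[2] = \bigoplus_{j:\,(\mathcal{C}^t)_{i,j}=1} x_j(0)$. So the content of the lemma is precisely the identification of the set $\{\, j \in V : (\mathcal{C}^t)_{i,j} = 1 \,\}$ with $M_i(t)$, the support of $\widetilde F^{\,t}(\overline 0^i)$.

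First I would set up the duality. The interaction matrix of $\widetilde N$ is $\mathstrut^t\mathcal{C}$ by definition, so $\widetilde F(y) = \mathstrut^t\mathcal{C}\cdot y$ and $\widetilde F^{\,t}(y) = (\mathstrut^t\mathcal{C})^t \cdot y = \mathstrut^t(\mathcal{C}^t)\cdot y$, again over $\mathbb{F}_2$. Now $\overline 0^i$ is the configuration whose support is $\{i\}$, i.e.\ the $i$-th standard basis vector $e_i$. Therefore $\widetilde F^{\,t}(\overline 0^i) = \mathstrut^t(\mathcal{C}^t)\cdot e_i$, whose $j$-th coordinate is $(\mathstrut^t(\mathcal{C}^t))_{j,i} = (\mathcal{C}^t)_{i,j}$. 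Consequently $M_i(t) = \{\, j\in V : (\mathcal{C}^t)_{i,j} = 1 \,\}$, which is exactly the index set appearing in the linear form for $x_i(t)$ above. Combining the two computations gives $x_i(t) = \bigoplus_{j\in M_i(t)} x_j(0)$ for every $i\in V$, every $t\in\mathcal{T}$ and every $x(0)\in\{0,1\}^n$, as claimed.

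The only genuinely delicate point is bookkeeping with the transpose: one must be careful that ``state $1$ in configuration $\widetilde F^{\,t}(\overline 0^i)$'' picks out the $i$-th \emph{column} of $\mathcal{C}^t$ rather than its $i$-th row, which is precisely why the lemma is phrased in terms of the symmetric network $\widetilde N$ rather than $N$ itself; everything else is a routine consequence of $\mathbb{F}_2$-linearity and $F^t = \mathcal{C}^t$ (mod $2$). One can also note, as a sanity check, that for $t=0$ we have $\mathcal{C}^0 = \mathrm{Id}$, so $M_i(0) = \{i\}$ and the formula reduces to $x_i(0) = x_i(0)$, and that the inductive step $M_i(t+1) = \bigcup_{\,\text{disjointly over }\mathbb{F}_2} \{\, M_j(t) : j\in\mathcal{N}^-(i)\,\}$ matches $\mathcal{C}^{t+1} = \mathcal{C}\cdot\mathcal{C}^t$; so an alternative, fully self-contained write-up is a straightforward induction on $t$ using the definition $x_i(t+1) = \bigoplus_{j\in\mathcal{N}^-(i)} x_j(t)$ and the recursion for $M_i(t)$ coming from one step of $\widetilde F$.
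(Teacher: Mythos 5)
Your proof is correct, but it follows a different route from the paper's. You argue purely by $\mathbb{F}_2$-linear algebra: since $F(x)=\mathcal{C}\cdot x$ and the symmetric network has matrix $\mathstrut^t\mathcal{C}$, you identify $M_i(t)$ with the support of the $i$-th row of $\mathcal{C}^t$ (via $\widetilde{F}^{\,t}(\overline{0}^i)=\mathstrut^t(\mathcal{C}^t)\cdot e_i$), and the lemma then reads off directly from $x_i(t)=\bigoplus_{j}(\mathcal{C}^t)_{i,j}\,x_j(0)$. The paper instead proceeds by induction on $t$ without ever invoking matrix powers: it expands $y_i(t+1)=\bigoplus_{j\in M_i(t)}y_j(1)$, regroups the XOR sum by parity of $|\widetilde{\mathcal{N}}^-(\ell)\cap M_i(t)|$, and uses the fact that one step of $\widetilde{F}$ applied to $\overline{0}^{M_i(t)}$ produces $\overline{0}^{M_i(t+1)}$ -- essentially the ``fully self-contained induction'' you sketch as a sanity check at the end. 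Your version is more conceptual and makes the role of the transpose transparent (the delicate row-versus-column bookkeeping you flag is exactly the point of phrasing the mask in terms of $\widetilde{N}$), at the cost of presupposing that iteration of the global map corresponds to the matrix power $\mathcal{C}^t$ over $\mathbb{F}_2$; the paper's induction is more elementary and stays entirely at the level of local XOR updates. Either argument establishes the lemma.
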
  
\begin{proof} 
  We prove Lemma~\ref{mask} by induction on $t \in \mathcal{T}$.\smallskip

  For $t = 0$, $M_i(0) = \{i\}$ holds by definition of configuration
  $\overline{0}^i$. Thus, $\forall x(0) \in \{0,1\}^n,\ x_i(0) = \bigoplus_{j
    \in M_i(0)} x_j(0)$.\smallskip
  
  Now, suppose that $\forall x(0) \in \{0,1\}^n,\ x_i(t) = \bigoplus_{j \in
    M_i(t)} x_j(0)$ and consider the initial configuration $y(0) \in
  \{0, 1\}^n$.\smallskip
  
  Since $y(t+1) = \widetilde{F}^{t+1}(y(0)) = \widetilde{F}^t(y(1))$, applying
  the induction hypothesis to configuration $x(0) = y(1)$ yields:
  \begin{equation*}
  y_i(t+1) = \bigoplus_{j \in M_i(t)} y_j(1)\text{.}
  \end{equation*}
  By definition, $\forall j \in V,\ y_j(1) = f_j(y(0)) = \bigoplus_{\ell \in
    \mathcal{N}^- (j)} y_\ell(0) = \bigoplus_{\ell \in \widetilde{\mathcal{N}}^+
    (j)} y_\ell(0)$. Thus, with the commutativity and associativity of the
  $\oplus$ operator, we can derive that:
  \begin{equation*}
    \begin{split}
      y_i(t+1) & = \bigoplus_{j \in M_i(t)} \big(
      \bigoplus_{\ell \in \widetilde{\mathcal{N}}^+ (j)} y_\ell(0)\big)\\
      & = \bigoplus_{\{\ell\  |\widetilde{\mathcal{N}}^-(l) \cap
        M_i(t)| = 1~[2]\}} y_l(0)
    \end{split}
  \end{equation*}
  Now, let us remark that $\forall t\in\mathcal{T},\
  \widetilde{F}(\overline{0}^{M_i(t)}) = \overline{0}^{M_i(t+1)}$ by
  definition. Then, $\forall \ell \in V, \overline{0}^{M_i(t+1)}_\ell= 1$ if and
  only if $|\widetilde{\mathcal{N}}^-(l)\cap M_i(t)|\equiv 1~ [2]$. 
  From this follows $y_i(t+1) = \bigoplus_{j \in M_i(t+1)} y_j(0)$ and then
  $\forall t \in \mathcal{T},\ x_i(t) = \bigoplus_{j \in M_i(t)} x_j(0)$.
\end{proof}
\begin{lem}
  \label{symmetric_network} 
  Let $N$ be a $k$-\textsc{xor} circulant network of size $n$ with automata set
  $V$ and global transition function $F$. For any automaton $i\in V$, and for
  any configuration $x\in \{0,1\}^n$, it holds that $\widetilde{F}(S_i(x)) =
  S_i(F(x))$.
\end{lem}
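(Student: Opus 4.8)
The plan is to check the claimed identity coordinate by coordinate and reduce it to a symmetry property of circulant matrices. Recall that under the parallel updating mode $F(x)=\mathcal{C}\cdot x$ and $\widetilde F(y)=\mathstrut^t\mathcal{C}\cdot y$, all arithmetic being carried out modulo $2$, and that $S_i(x)_j=x_{2i-j\,[n]}$. Since $\mathcal{C}$ is circulant there are coefficients $c_0,\dots,c_{n-1}$ with $\mathcal{C}_{a,b}=c_{\,b-a\,[n]}$ for all $a,b\in V$, hence $(\mathstrut^t\mathcal{C})_{a,b}=\mathcal{C}_{b,a}=c_{\,a-b\,[n]}$.

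First I would expand the right-hand side: for every $j\in V$,
\[
  S_i(F(x))_j \;=\; F(x)_{2i-j\,[n]} \;=\; \bigoplus_{\ell\in V}\mathcal{C}_{2i-j,\ell}\,x_\ell \;=\; \bigoplus_{\ell\in V} c_{\,\ell-2i+j\,[n]}\,x_\ell .
\]
Then I would expand the left-hand side, using that $S_i$ only permutes coordinates (so it distributes over $\oplus$): for every $j\in V$,
\[
  \widetilde F(S_i(x))_j \;=\; \bigoplus_{\ell\in V}(\mathstrut^t\mathcal{C})_{j,\ell}\,S_i(x)_\ell \;=\; \bigoplus_{\ell\in V}\mathcal{C}_{\ell,j}\,x_{2i-\ell\,[n]} .
\]
Performing the substitution $m=2i-\ell\,[n]$, which is a bijection of $V$ because $\ell\mapsto 2i-\ell\,[n]$ is an involution of $\mathbb{Z}/n\mathbb{Z}$, this last sum becomes $\bigoplus_{m\in V}\mathcal{C}_{2i-m,j}\,x_m=\bigoplus_{m\in V}c_{\,j-2i+m\,[n]}\,x_m$, which is exactly the expression found for $S_i(F(x))_j$. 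As the two configurations agree in every coordinate $j$, they are equal.

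Equivalently — and this is the conceptual content — one may note that $S_i$ is the linear map given by the permutation matrix $P$ of the involution $j\mapsto 2i-j\,[n]$; $P$ is symmetric and $P^2=\mathrm{Id}$, and the assertion $\widetilde F\circ S_i=S_i\circ F$ is the matrix identity $\mathstrut^t\mathcal{C}\,P=P\,\mathcal{C}$, i.e. $\mathstrut^t\mathcal{C}=P\,\mathcal{C}\,P$; evaluating, $(P\mathcal{C}P)_{a,b}=\mathcal{C}_{2i-a,\,2i-b}=c_{\,a-b\,[n]}=(\mathstrut^t\mathcal{C})_{a,b}$. There is no genuine obstacle here; the only thing that needs a little care is the index bookkeeping modulo $n$ — checking that $\ell\mapsto 2i-\ell\,[n]$ is indeed a bijection of $V$ and keeping the signs in the circulant relation $\mathcal{C}_{a,b}=c_{\,b-a\,[n]}$ straight — but morally the lemma just says that reflecting the cycle $\mathbb{Z}/n\mathbb{Z}$ about $i$ conjugates the local rule of $N$ into that of its mirror network $\widetilde N$.
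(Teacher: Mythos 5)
Your proof is correct and follows essentially the same route as the paper's: a coordinate-wise verification of $\widetilde{F}(S_i(x))_j = S_i(F(x))_j$ that reindexes the XOR by the involution $\ell \mapsto 2i-\ell~[n]$ and invokes the translation invariance of the circulant structure, which you write as $\mathcal{C}_{a,b}=c_{b-a~[n]}$ while the paper phrases it in terms of in-neighbourhoods. Your closing reformulation as the conjugation identity $\mathstrut^t\mathcal{C}=P\,\mathcal{C}\,P$ is a tidy repackaging of that same computation, not a different argument.
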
  
\begin{proof} 
  For any $j \in V$, the following holds:
  \begin{equation*}
    \begin{split}
      \widetilde{F}(S_i(x))_j & = \bigoplus_{\ell \in \widetilde{\mathcal{N}}^-
        (j)} (S_i(x))_\ell\ =\ \bigoplus_{\ell \in \widetilde{\mathcal{N}}^- (j)} 
      x_{2i-\ell~[n]}\\
      & = \bigoplus_{\{\ell \text{ s.t. } 2i-\ell~[n]\, \in\, \widetilde{\mathcal{N}}^-(j)\}}
      x_{\ell}\\
      & = \bigoplus_{\{\ell \text{ s.t. } j\, \in\, \mathcal{N}^-(2i-\ell~[n])\}}
      x_{\ell}\text{.}
    \end{split}
  \end{equation*} 
  If $j \in \mathcal{N}^-(2i-\ell~[n])$, then all automata $l, l' \in V$ of $N$
  such that $l-l' = j-(2i-\ell)~[n]$ are such that $l \in \mathcal{N}^-(l')$. In
  particular, if automaton $j \in \mathcal{N}^-(2i-\ell~[n])$, then $\ell \in
  \mathcal{N}^-(2i-j)$. Hence:
  \begin{equation*}
    \begin{split}
      \bigoplus_{\{\ell \text{ s.t. } j\, \in\, \mathcal{N}^-(2i-\ell~[n])\}} x_{\ell} &
      = \bigoplus_{\ell\, \in\, \mathcal{N}^-(2i-j~[n])} x_{\ell}\\
      & = F(x)_{2i-j}\\
      & = (S_i(F(x)))_j\text{,}
    \end{split}
  \end{equation*}
  and Lemma~\ref{symmetric_network} follows.
\end{proof}
\begin{prop}
  \label{symmetric_space_time_diagram} 
  Let $N$ be a $k$-\textsc{xor} circulant network of size $n$ with automata set
  $V$ and global transition function $F$. For any automaton $i \in V$ and for
  the initial configuration $x(0)=\overline{0}^i$, it holds that $\forall t \in
  \mathcal{T}, \widetilde{F}^t(x(0)) = S_i(x(t))$.
\end{prop}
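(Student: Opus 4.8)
The plan is to proceed by induction on $t \in \mathcal{T}$, using Lemma~\ref{symmetric_network} as the engine of the inductive step and a one-line computation for the base case. The key preliminary observation is that the starting configuration $\overline{0}^i$ is fixed by the symmetry $S_i$: since $S_i(\overline{0}^i)_j = \overline{0}^i_{\,2i-j~[n]}$ equals $1$ precisely when $2i-j \equiv i~[n]$, that is when $j = i$, we get $S_i(\overline{0}^i) = \overline{0}^i$. (Incidentally $S_i$ is an involution, since $S_i(S_i(x))_j = x_{\,2i-(2i-j)~[n]} = x_j$.)

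For the base case $t = 0$, one simply has $\widetilde{F}^0(x(0)) = x(0) = \overline{0}^i = S_i(\overline{0}^i) = S_i(x(0))$, using the observation above.

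For the inductive step, assume $\widetilde{F}^t(x(0)) = S_i(x(t))$. Then $\widetilde{F}^{t+1}(x(0)) = \widetilde{F}\bigl(\widetilde{F}^t(x(0))\bigr) = \widetilde{F}\bigl(S_i(x(t))\bigr)$, and Lemma~\ref{symmetric_network} applied to the configuration $x(t)$ yields $\widetilde{F}\bigl(S_i(x(t))\bigr) = S_i\bigl(F(x(t))\bigr) = S_i(x(t+1))$, which is exactly the claim at rank $t+1$. This closes the induction.

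I do not anticipate a genuine obstacle here: the substantive content — that conjugating $F$ by the reflection $S_i$ produces the symmetric transition function $\widetilde{F}$ — is already packaged in Lemma~\ref{symmetric_network}, so the proposition is merely its iterated form, $\widetilde{F}^t = S_i \circ F^t \circ S_i$ (valid on all of $\{0,1\}^n$ because $S_i$ is an involution), specialised to the $S_i$-invariant initial configuration $\overline{0}^i$. The only point requiring a moment's care is verifying $S_i(\overline{0}^i) = \overline{0}^i$ in the base case; everything else is a telescoping of Lemma~\ref{symmetric_network}.
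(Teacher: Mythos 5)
Your proof is correct and follows exactly the paper's own argument: induction on $t$ with Lemma~\ref{symmetric_network} driving the inductive step, the base case resting on $S_i(\overline{0}^i)=\overline{0}^i$ (which the paper leaves implicit and you verify explicitly). Nothing to add.
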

\begin{proof} 
  Proposition~\ref{symmetric_space_time_diagram} is proven by induction on $t\in
  \mathcal{T}$.  Let $t = 0$. Property $ \widetilde{F}^t(x(0)) = S_i(x(t))$ is
  true because $x(0) = \overline{0}^i$.  Suppose that it is true for $t \in
  \mathcal{T}$.  Then, we have $\widetilde{F}^{t+1}(x(0)) =
  \widetilde{F}(\widetilde{F}^t(x(0))) = \widetilde{F}(S_i(x(t))$. By
  Lemma~\ref{symmetric_network}, $\widetilde{F}(S_i(x(t)) = S_i(F(x(t)) =
  S_i(x(t+1))$, which is the expected result.
\end{proof}

Remark that this result is due to the fact that $F$ and $\widetilde{F}$ are the
global transition functions of two symmetric $k$-\textsc{xor} circulant networks
that are isomorphic by definition (see
Figure~\ref{fig_xor-ca-17-24}). Proposition~\ref{symmetric_space_time_diagram}
implies that, for any automaton $i \in V$, the space-time diagram of
$(\overline{0}^{i}(t))_{t \in \mathcal{T}}$ is the symmetric space-time diagram
of $(\overline{0}^{M_i(t)})_{t \in \mathcal{T}}$ with respect to $i$ and is
related to the trace of automaton $i$. Thus, the space-time diagrams of
configurations of density $\frac{1}{n}$ carry information on the global
behaviours of $N$. We examine further these properties in the following results.
\begin{prop}
  \label{prop_densite-1} Let $N$ be a $k$-\textsc{xor} circulant network of size
  $n$ with automata set $V$ and global transition function $F$. The maximum
  convergence time, \emph{i.e.}, the maximal transient trajectory length, is
  reached by configurations of density $\frac{1}{n}$. Moreover, let $p_\ast$ be
  the period of the attractors reached by configurations of density
  $\frac{1}{n}$.  Then, for any configuration $x$ of $N$, the period of its
  attractor (\emph{i.e.}, of the attractor that is reached by the network when
  it is initially in configuration $x$) divides $p_\ast$.
\end{prop}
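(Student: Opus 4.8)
The plan is to exploit the fact that under parallel updating a \textsc{xor} circulant network acts on $\{0,1\}^n = \mathbb{F}_2^n$ by the linear map $x \mapsto \mathcal{C}\cdot x$. Every trajectory therefore lies in a coset structure governed by the minimal polynomial (over $\mathbb{F}_2$) of $\mathcal{C}$, and the maximal convergence time and the periods of attractors are exactly the invariants one reads off from how $x^m$ factors into $\mu_{\mathcal{C}}(x)$, the minimal polynomial, as $x^{t}\cdot g(x)$ with $g(0)\neq 0$: the transient length of the whole network is the multiplicity $t_0$ of the factor $x$ in $\mu_{\mathcal{C}}$, and the global period $p_{\mathrm{max}}$ is the multiplicative order of $x$ modulo the ``invertible part'' $g_0(x)=\mu_{\mathcal{C}}(x)/x^{t_0}$. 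So the first step is to record these standard facts: for any $x(0)$, its transient length is the least $t$ with $\mathcal{C}^{t}x(0)$ in the image of $\mathcal{C}^{t_0}$ restricted to its eventual image, and the period of its attractor divides $p_{\mathrm{max}}$, with equality for at least one configuration.

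**Reducing to density‑$1/n$ configurations via the canonical basis.**
The key observation is that the configurations $\overline{0}^{\,i}$, $i\in V$, form the canonical basis $e_0,\dots,e_{n-1}$ of $\mathbb{F}_2^n$, and by Proposition~\ref{prop_basic}(4) their trajectories are all isomorphic — they are circular permutations of one another — so they share a common transient length $\tau$ and a common attractor period $p_\ast$. Since $\mathcal{C}$ is linear over $\mathbb{F}_2$, any $x(0)=\bigoplus_{i\in W}e_i$ satisfies $\mathcal{C}^{t}x(0)=\bigoplus_{i\in W}\mathcal{C}^{t}e_i$. First I would show $\tau = t_0$ (the transient of the whole network): on one hand $\tau \le t_0$ because the basis vectors cannot converge slower than the worst configuration; on the other hand, if every basis vector had transient $<t_0$ then by linearity \emph{every} configuration would, contradicting the definition of $t_0$ as the multiplicity of $x$ in $\mu_{\mathcal{C}}$. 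Hence the maximal convergence time, which is $t_0$, is attained already on density‑$1/n$ configurations — this gives the first claim.

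**The period statement.**
For the divisibility claim, let $p_\ast$ be the common period of the attractor reached from any $\overline{0}^{\,i}$. The eventual image $E=\bigcap_{m\ge 0}\mathrm{Im}(\mathcal{C}^{m})$ is $\mathcal{C}$‑invariant and $\mathcal{C}|_E$ is invertible; $p_\ast$ is the order of $\mathcal{C}|_E$ acting on $\mathcal{C}^{t_0}e_i$, and since the orbit of $\mathcal{C}^{t_0}e_i$ together with its circular shifts spans $E$ (the shifts of $e_i$ span the whole space, and $\mathcal{C}^{t_0}$ maps onto $E$), $p_\ast$ is actually the order of $\mathcal{C}|_E$ itself, i.e. $p_\ast = p_{\mathrm{max}}$. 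Now for an arbitrary $x(0)$, its attractor is the orbit of $\mathcal{C}^{t_0}x(0)\in E$ under $\mathcal{C}|_E$, whose length divides the order of $\mathcal{C}|_E$, namely $p_\ast$. This is exactly the assertion. I would phrase this without invoking minimal polynomials explicitly if the paper prefers an elementary argument: simply note $\mathcal{C}^{t_0+p_\ast}e_i=\mathcal{C}^{t_0}e_i$ for all $i$, hence by linearity $\mathcal{C}^{t_0+p_\ast}x=\mathcal{C}^{t_0}x$ for all $x$, so every attractor period divides $p_\ast$, and separately exhibit that $p_\ast$ is realised (by the $\overline{0}^{\,i}$ themselves).

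**Main obstacle.**
The delicate point is the claim that the $\overline{0}^{\,i}$ actually realise the \emph{maximum} transient and the \emph{global} period $p_{\mathrm{max}}$, rather than merely \emph{some} transient and \emph{some} period dividing $p_{\mathrm{max}}$. The divisibility direction is free from linearity; the extremality direction is where one must be careful, and the clean way to get it is precisely the remark that the basis vectors span the space, so no proper $\mathcal{C}$‑invariant behaviour (shorter transient, smaller period) can hold simultaneously on all of them without holding everywhere. I expect that spelling out this spanning argument cleanly — in particular that $\mathcal{C}^{t_0}e_0,\dots,\mathcal{C}^{t_0}e_{n-1}$ span the eventual image $E$ — is the one step that needs genuine care rather than routine manipulation.
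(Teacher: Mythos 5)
Your proof is correct, and the elementary version you sketch at the end is in fact complete on its own: the configurations of density $\frac{1}{n}$ are exactly the canonical basis vectors $\overline{0}^{\,i}$ of $\mathbb{F}_2^n$, they share a transient length $\tau$ and a period $p_\ast$ by Proposition~\ref{prop_basic}.4, so $\mathcal{C}^{\tau+p_\ast}\overline{0}^{\,i}=\mathcal{C}^{\tau}\overline{0}^{\,i}$ for every $i$, and linearity of $F(x)=\mathcal{C}\cdot x$ over $\mathbb{F}_2$ gives $\mathcal{C}^{\tau+p_\ast}x=\mathcal{C}^{\tau}x$ for every $x$, whence every configuration converges within $\tau$ steps to an attractor whose period divides $p_\ast$, with both bounds attained by the $\overline{0}^{\,i}$ themselves. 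This is a genuinely different route from the paper's. The paper does not decompose the initial configuration over the basis; instead it fixes an automaton $i$ and uses Lemma~\ref{mask} to write its trace as $x_i(t)=\bigoplus_{j\in M_i(t)}x_j(0)$, where the masks $M_i(t)$ are the trajectory of $\overline{0}^{\,i}$ under the \emph{symmetric} network $\widetilde{F}$, and then invokes Proposition~\ref{symmetric_space_time_diagram} to identify that trajectory with the reflected trajectory of $\overline{0}^{\,i}$ under $F$, so that the constants $t_\ast$ and $p_\ast$ of the density-$\frac{1}{n}$ configurations control every trace coordinatewise. Both arguments ultimately rest on linearity (the mask lemma is the row-wise version of it), but yours bypasses the symmetric network and the space-time-diagram machinery entirely, which makes it shorter and more self-contained, while the paper's version reuses its Lemmas~\ref{mask}--\ref{symmetric_network} and stays within its cellular-automata framing. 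Your preliminary minimal-polynomial discussion ($t_0$, $p_{\mathrm{max}}$, the eventual image $E$) is correct but unnecessary for the statement as claimed; if you keep the contradiction argument that the basis transient equals the global transient, note that it needs the basis vectors' common period (or an lcm of periods) to pass from ``each basis vector has transient $<t_0$'' to ``every configuration does'', a detail your elementary formulation avoids altogether.
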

\begin{proof}
  Since all configurations of density $\frac{1}{n}$ are cyclic permutations of
  one another, by Proposition~\ref{prop_basic}.4 they all have isomorphic
  trajectories so that they all hit their limit set at the same time $t_\ast$
  and they all have the same period $p_\ast$. Now, consider configuration $x$
  and automaton $i$. By Proposition~\ref{symmetric_space_time_diagram}, the
  space-time diagram of $(\overline{0}^{M_i(t)})_{t \in \mathcal{T}}$ is the
  symmetric space-time diagram of $(\overline{0}^i(t))_{t \in \mathcal{T}}$ with
  respect to $i$. Thus, the space-time diagram of $(\overline{0}^{M_i(t)})_{t
    \in \mathcal{T}}$ hits its limit set at time $t_\ast$ and its period is
  $p_\ast$. This means that, $\forall i \in N$, the trace of automaton $i$ has
  period $p_\ast$ and hits its limit before $t_\ast$. Thus, the trajectory of
  $x$ reaches its limit set before $t_\ast$ and its period divides $p_\ast$.
\end{proof}

\subsubsection{$2$-\textsc{xor} circulant networks}
\label{sec-twoxor}

Let us now concentrate on $2$-\textsc{xor} circulant networks of arbitrary size
$n$ and pay particular attention to the space-time diagrams of configurations of
density $\frac{1}{n}$. We define the \emph{interaction-step} of such a network
$N$ as the smallest integer $s \neq 1 <n$ such that $\forall i \in V,$
\mbox{$(i, i+s~[n]) \in A$}.  As illustrated in Figure~\ref{fig_xor-ca-17-24}
$(a)$ and $(b)$, when $s = 0$ the space-time diagram is the Sierpinski
triangle. For other values of $s$, space-time diagrams seem like deformed
Sierpinski triangles. From these observations results the following lemma.
\begin{figure}[t!]
  \centerline{
    \begin{tabular}{ccc}
      \includegraphics[scale=0.8]{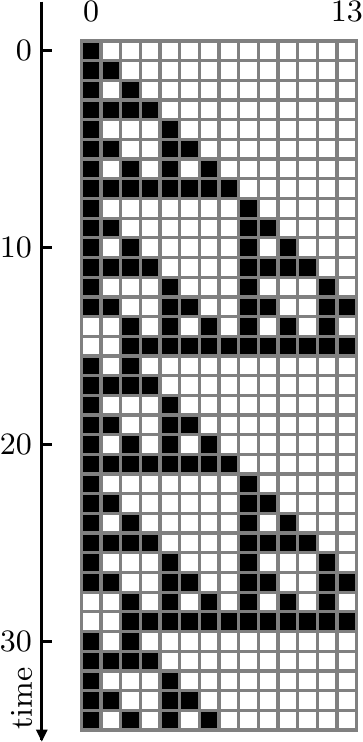}~ &
      ~~\includegraphics[scale=0.8]{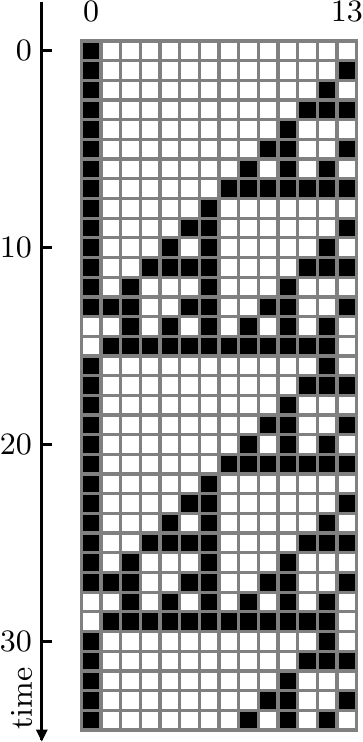}~ &
      ~~\includegraphics[scale=0.8]{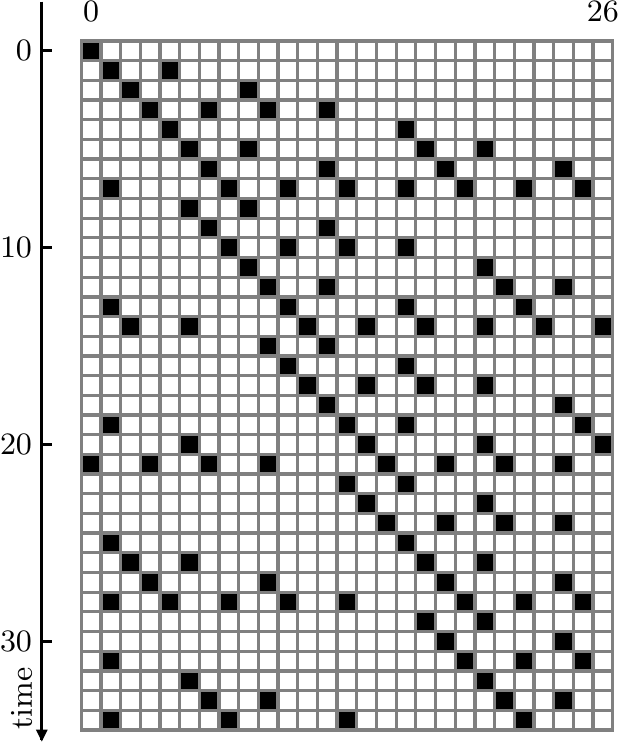}\\
      $(a)$ & $(b)$ & $(c)$
    \end{tabular}
  }
  \caption{Space-time diagrams $(a)$ of a $2$-\textsc{xor} circulant network of
    size $14$ and interaction-step $s=0$ (see Section~\ref{sec-twoxor}), $(b)$
    of its symmetric network and $(c)$ of another $2$-\textsc{xor} circulant
    network of size $27$ and interaction-step $4$.}
  \label{fig_xor-ca-17-24}
\end{figure}
\begin{lem}
  \label{lem_local}
  Let $N$ be a $2$-\textsc{xor} circulant network of size $n$ with
  interaction-step $s = 0$. The following holds:
  \begin{equation*}
    \forall i \in V, \forall q \in \mathbb{N},\ x_i(2^q) = x_{(i-2^q)~[n]}(0) 
    \oplus x_i(0) \text{.}
  \end{equation*}
\end{lem}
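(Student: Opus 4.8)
The first step is to unfold what $s=0$ means for the local rule. By point $(iii)$ of the definition the first row $\mathcal{C}_0$ of the circulant interaction matrix has $c_{n-1}=1$, which accounts for the ``step-$1$'' arcs $(i,i+1~[n])$; being a $2$-\textsc{xor} network, $\mathcal{C}_0$ has exactly one further non-null entry, and the interaction-step $s$ is by definition the ``step'' that this second entry produces. Hence $s=0$ forces $c_0=1$, i.e.\ $\mathcal{C}_{i,i}=1$ and $\mathcal{C}_{i,i-1~[n]}=1$ while $\mathcal{C}_{i,j}=0$ for every other $j$, so that the parallel dynamics is
\begin{equation*}
  \forall i\in V,\ \forall t\in\mathcal{T},\qquad x_i(t+1)=x_i(t)\oplus x_{i-1~[n]}(t).
\end{equation*}
Equivalently, over $\mathbb{F}_2^{n}$ the global transition function is the linear map $F=\mathrm{Id}+P$, where $P$ is the cyclic shift defined by $P(x)_i=x_{i-1~[n]}$.

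From here I would argue linear-algebraically. Since $\mathrm{Id}$ and $P$ commute and we work in characteristic $2$,
\begin{equation*}
  F^{2^q}=(\mathrm{Id}+P)^{2^q}=\sum_{j=0}^{2^q}\binom{2^q}{j}P^{j}=\mathrm{Id}+P^{2^q},
\end{equation*}
the middle binomial coefficients $\binom{2^q}{j}$ with $0<j<2^q$ being even --- equivalently $(1+X)^{2^q}=1+X^{2^q}$ in $\mathbb{F}_2[X]$, which follows by an immediate induction on $q$ using that squaring is additive in characteristic $2$. Because $P^{2^q}(x)_i=x_{i-2^q~[n]}$, applying $F^{2^q}$ to $x(0)$ yields $x_i(2^q)=x_i(0)\oplus x_{i-2^q~[n]}(0)$, which is the claim; note that this also covers the degenerate case $n\mid 2^q$, where both sides vanish.

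If one prefers to stay with the combinatorial tools already in use, the same identity comes out by induction on $q$: the case $q=0$ is exactly the local rule above, and for the inductive step one writes $x(2^{q+1})=x(2^q+2^q)$, applies the level-$q$ identity to the trajectory issued from $x(2^q)$ --- legitimate because the statement holds for \emph{every} initial configuration, precisely as in the application of Lemma~\ref{mask} --- to obtain $x_i(2^{q+1})=x_i(2^q)\oplus x_{i-2^q~[n]}(2^q)$, then applies the level-$q$ identity again (from time $0$) to each of these two terms; the common summand $x_{i-2^q~[n]}(0)$ cancels since $a\oplus a=0$, leaving $x_i(0)\oplus x_{i-2^{q+1}~[n]}(0)$. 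There is no genuine obstacle here: the only points worth a line of justification are the re-use of the level-$q$ statement from a later time step (in the inductive variant) and the vanishing modulo $2$ of $\binom{2^q}{j}$ for $0<j<2^q$ (in the linear-algebra variant).
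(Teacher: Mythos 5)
Your proposal is correct, and it actually contains two routes. The inductive variant you sketch at the end is essentially the paper's own proof: the authors also induct on $q$, splitting $2^q = 2^{q-1}+2^{q-1}$, applying the lower-level identity twice (once from time $0$ and once from time $2^{q-1}$, which is legitimate precisely because the statement is quantified over all initial configurations) and cancelling the repeated middle term --- in their notation $d=a\oplus b$, $e=b\oplus c$, $f=d\oplus e$, hence $f=a\oplus c$; the only cosmetic differences are that they start the induction at $q=1$ and read the cancellation off a space-time diagram. Your primary, linear-algebraic route is genuinely different and arguably cleaner: writing $F=\mathrm{Id}+P$ over $\mathbb{F}_2$ (your identification of $s=0$ with $c_0=c_{n-1}=1$, i.e.\ $x_i(t+1)=x_{i-1~[n]}(t)\oplus x_i(t)$, matches the paper's usage) and invoking $(1+X)^{2^q}=1+X^{2^q}$ in characteristic $2$ packages the cancellation once and for all, makes transparent why powers of $2$ are the relevant times, silently covers the degenerate case $n\mid 2^q$, and generalises immediately: any $2$-\textsc{xor} circulant network has $F=P^{a}+P^{b}$, so $F^{2^q}=P^{a2^q}+P^{b2^q}$, which yields the analogue of the lemma for arbitrary interaction-step (and similarly for $k$-\textsc{xor} networks with $k$ a power-of-$2$-friendly structure), something the paper's diagram-based induction does not give for free. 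What the paper's version buys in exchange is that it stays entirely within the elementary cellular-automata toolkit used in the rest of Section~\ref{sec_xor}, without appealing to the algebra of circulant matrices over $\mathbb{F}_2$.
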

\begin{proof} 
  Lemma~\ref{lem_local} is proven by induction on $q$.\smallskip

  Let $i \in V$ be an arbitrary automaton and let $q$ equal $1$ initially. Then,
  obviously, the following is true:
  \begin{equation*}
    \begin{split}
      x_i(2) & = x_{(i-1)~[n]}(1) \oplus x_i(1)\\
      & = x_{(i-2)~[n]}(0) \oplus x_{(i-1)~[n]}(0) \oplus x_{(i-1)~[n]}(0)
      \oplus x_{i}(0)\\
      & = x_{(i-2)~[n]}(0) \oplus x_{i}(0) \text{,}
    \end{split}
  \end{equation*}
  and the basis of the induction holds.\smallskip

  Now, let us assume as induction hypothesis that $x_i(2^q) = x_{(i-2^q)~[n]}(0)
  \oplus x_i(0)$ is true for $q\in\mathbb{N}$.  In the sequel, we pay particular
  attention to states
  \begin{gather*}
    a = x_i(0)\text{,} \quad b = x_{(i-2^{q-1})~[n]}(0)\text{,}\quad c =
    x_{(i-2^q)~[n]}(0)\text{,} \quad d = x_i(2^{q-1})\text{,}\\
    e = x_{(i-2^{q-1})~[n]}(2^{q-1}) \quad \text{and} \quad f =
    x_i(2^{q})\text{,}
  \end{gather*}
    as illustrated in Figure~\ref{fig_xor-ca}.\smallskip
  \begin{figure}[t!]
    \centerline{\includegraphics[scale=0.8]{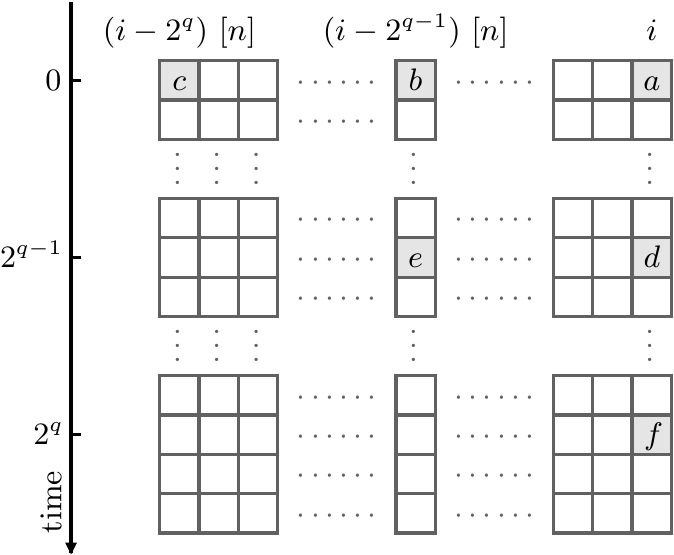}}
    \caption{Space-time diagram of an arbitrary $2$-\textsc{xor} circulant 
      network of size $n$ and interaction-step $s=0$.}
    \label{fig_xor-ca}
  \end{figure}
  
  Then, for $q+1$, according to the induction hypothesis, we have:
  \begin{equation*}
    d = a \oplus b \text{,\quad} e = b \oplus c \text{\quad and\quad} 
    f = d \oplus e\text{.}
  \end{equation*}
  Then, we derive that:
  \begin{equation*}
    f\ =\ d \oplus e\ =\ (a \oplus b) \oplus (b \oplus c)\ =\ a \oplus c 
    \text{.}
  \end{equation*}
  As a result, we can write:
  \begin{equation*}
    \forall i \in V, \forall q \in \mathcal{N},\ x_i(2^q) = x_i(0)
    \oplus x_{(i-2^q)~[n]}(0) \text{,}
  \end{equation*}
  which is the expected result.
\end{proof}
We will use this lemma to analyse $2$-\textsc{xor} circulant networks of size $n
= 2^p$, $p \in \mathbb{N^*}$, and interaction step $s=0$.

\subsubsection{$2$-\textsc{xor} circulant networks of sizes powers of $2$}

In this paragraph, we focus on $2$-\textsc{xor} circulant networks of sizes $n =
2^p$, where $p \in \mathbb{N^*}$. Let $x = (x_0, \ldots, x_{n-1}) \in \{0,1\}^n$
be a configuration of such a network $N$. We can see $x$ as the concatenation of
two vectors of sizes $\frac{n}{2}$ such that $x = (x',x'')$, where $x' = (x_0,
\ldots, x_{\frac{n}{2}-1})$ and $x'' = (x_{\frac{n}{2}}, \ldots, x_{n-1})$ are
called \emph{semi-configurations} of $x$. The \emph{repetition degree}
$\delta_r(x)$ of $x$ is then defined as:
\begin{equation*}
  \delta_r(x = (x',x'')) = \begin{cases}
    0 & \text{if } x' \neq x'' \text{,}\\
    \delta & \text{if } (x' = x'') \land (\delta_r(x') = \delta - 1)\text{.} 
  \end{cases}
\end{equation*}
Notice that if $x = (x',x')$, $x$ is said to be a \emph{repeated configuration}
and that, in the worst case ({\it i.e.}, when the repetition degree $\delta_r(x) =
\log_2(n)$ is maximal), the time complexity of the computation of the repetition
degree of any configuration $x$ equals $n$.\medskip

Proposition~\ref{prop_repeated_configuration} below characterises the dynamical
behaviours of repeated configurations $x \in \{0,1\}^n$ of repetition degree
$\delta_r(x) \geq \log_2(n) - 1$ in arbitrary $2$-\textsc{xor} circulant
networks of size $n = 2^p$,  $p \in \mathbb{N}^*$.
\begin{prop}
  \label{prop_repeated_configuration}
  Let $N$ be a $2$-\textsc{xor} circulant network of size $n = 2^p$, $p \in
  \mathbb{N}^*$, and interaction-step $s$. Configurations $x \in \{0,1\}^n$
  of repetition degree $\delta_r(x) \geq \log_2(n) - 1$ converge towards $(0,
  \ldots, 0)$ in no more than $2$ time steps.
\end{prop}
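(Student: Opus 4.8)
The plan is to show that the hypothesis $\delta_r(x)\ge \log_2(n)-1 = p-1$ leaves only three shapes for $x$, and to settle each one by a direct computation with the parallel transition rule. First I would unwind the recursive definition of the repetition degree. Since one always has $\delta_r(x)\le \log_2 n = p$, the hypothesis means $\delta_r(x)\in\{p-1,p\}$. If $\delta_r(x)=p$, iterating the recursion forces all coordinates of $x$ to coincide, so $x=(0,\dots,0)$ or $x=(1,\dots,1)$. If $\delta_r(x)=p-1$, unwinding the recursion shows that $x$ is the $2^{p-1}$-fold repetition of a length-$2$ block $(a,b)$ with $a\neq b$; equivalently $x_i$ depends only on the parity of $i$ and takes both values $0$ and $1$. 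The two such configurations, $(0,1,0,1,\dots)$ and $(1,0,1,0,\dots)$, are circular permutations of each other (recall $n$ is even), so by Proposition~\ref{prop_basic}.4, which preserves the isomorphism type of trajectories and hence the convergence time, we may assume $x_i = i \bmod 2$.

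Next I would record the explicit dynamics. By points $(ii)$--$(iii)$ of the definition of a $2$-\textsc{xor} circulant network together with the definition of the interaction-step $s$, the in-neighbourhood of any automaton $i$ is $\{i-1~[n],\, i-s~[n]\}$, so under the parallel updating mode
\[
  F(x)_i = x_{i-1~[n]} \oplus x_{i-s~[n]} \qquad \text{for all } i\in V .
\]
(This is the rule already used implicitly in Lemma~\ref{lem_local} for the case $s=0$.)

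Then the three cases are immediate. If $x=(0,\dots,0)$, it is a stable configuration by Proposition~\ref{prop_basic}.2, hence convergence in $0$ steps. If $x=(1,\dots,1)$, then $F(x)_i = 1\oplus 1 = 0$ for every $i$ — equivalently $x$ is an antecedent of $(0,\dots,0)$ since $k=2$ is even, Proposition~\ref{prop_basic}.3 — hence convergence in $1$ step. Finally, for $x$ with $x_i = i\bmod 2$, using that $n$ is even we get
\[
  F(x)_i = x_{i-1~[n]} \oplus x_{i-s~[n]} = \big((i-1) + (i-s)\big) \bmod 2 = (s+1)\bmod 2,
\]
which is independent of $i$: $F(x)$ is the constant configuration all of whose coordinates equal $(s+1)\bmod 2$. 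Thus $F(x)=(0,\dots,0)$ when $s$ is odd (convergence in $1$ step), while $F(x)=(1,\dots,1)$ when $s$ is even, and then one more step gives $F^2(x)=(0,\dots,0)$ (convergence in $2$ steps). In every case $x$ reaches $(0,\dots,0)$ in at most $2$ steps.

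I expect the only delicate point to be the combinatorial bookkeeping in the first step: reading off from the recursive definition of $\delta_r$ that $\delta_r(x)=p$ is equivalent to $x$ being constant and $\delta_r(x)=p-1$ is equivalent to $x$ being exactly the parity-alternating pattern, and noting that the value $(s+1)\bmod 2$ obtained for $F(x)_i$ is the same for the other alternating configuration, so that the reduction by a circular permutation is legitimate; the arithmetic itself is routine. A harmless sanity check worth including is the boundary case $p=1$ (so $n=2$): there $\delta_r(x)\ge 0$ holds for \emph{every} configuration, the only alternating ones are $(0,1)$ and $(1,0)$, the interaction-step is $s=0$ (even), and indeed $(0,1)\to(1,1)\to(0,0)$, which saturates the claimed bound of $2$ steps.
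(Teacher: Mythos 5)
Your proof is correct and follows essentially the same route as the paper's: enumerate the four configurations with $\delta_r(x)\geq \log_2(n)-1$, compute one parallel step of the alternating configurations according to the parity of the interaction-step $s$, and conclude with Propositions~\ref{prop_basic}.2 and~\ref{prop_basic}.3. Your version is merely a little more explicit (unwinding the definition of $\delta_r$, the parity-mod-$n$ detail, and the reduction of the two alternating configurations to one via Proposition~\ref{prop_basic}.4), whereas the paper treats both alternating configurations simultaneously.
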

\begin{proof}
  First, notice that because $N$ is a $2$-\textsc{xor} circulant network of size
  $n = 2^p$, $p \in \mathbb{N}^*$, there exist only $4$ repeated
  configurations of degree no smaller than $\log_2(n) - 1$, that is, $(0, 1,
  \ldots, 0, 1)$, its dual $(1, 0, \ldots, 1, 0)$ and $(1, \ldots, 1)$ and its
  dual $(0, \ldots, 0)$. Let us consider the two distinct parities of $s$
  independently. Also, let $t \in \mathcal{T}$ and let $x(t)$ be either
    $(0, 1, \ldots, 0, 1)$ or $(1, 0, \ldots, 1, 0)$.
  \begin{enumerate}
  \item If $s$ is even, then, by hypothesis on $x(t)$:
    \begin{equation*}
      \forall i \in V,\ x_{(i+s)~[n]}(t+1) = x_i(t) \oplus x_{(i+s-1)~[n]}(t) = 1 
      \text{.}
    \end{equation*}
  \item If $s$ is odd, then, by hypothesis on $x(t)$:
    \begin{equation*}
      \forall i \in V,\ x_{(i+s)~[n]}(t+1) = x_i(t) \oplus x_{(i+s-1)~[n]}(t) = 0 
      \text{.}
    \end{equation*}
  \end{enumerate}
  This, together with Propositions~\ref{prop_basic}.2 and~\ref{prop_basic}.3,
  yields the expected result.
\end{proof}

From now on, we restrict the study to $2$-\textsc{xor} circulant networks of
sizes $n = 2^p$, $p \in \mathbb{N}^*$, and interaction-steps $s = 0$. We show
that such networks necessarily converge towards configuration $(0, \ldots, 0)$
in no more than $n$ time steps and that initial configurations with an odd
number of $1$ converge in exactly $n$ steps.
\begin{thm}
  \label{thm_convergence-lineaire}
  Let $N$ be a $2$-\textsc{xor} circulant network of size $n = 2^p$,  $p
  \in \mathbb{N}^*$, and interaction-step $0$. Any configuration $x$
  converges to the stable configuration $(0, \dots, 0)$ in no more than $n$ time
  steps.
\end{thm}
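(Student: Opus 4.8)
The plan is to obtain the conclusion essentially for free from Lemma~\ref{lem_local}, which is precisely the closed-form description of the states of the automata at time steps that are powers of $2$. The key observation is that here the size of the network is itself a power of $2$, namely $n = 2^p$ with $p \in \mathbb{N}^*$, so it suffices to instantiate that lemma at $q = p$.

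Concretely, I would fix an arbitrary initial configuration $x = x(0)$ and an automaton $i \in V$, and apply Lemma~\ref{lem_local} with $q = p$ (a legitimate instance since $p \geq 1$). This yields
\begin{equation*}
  x_i(2^p) = x_{(i - 2^p)~[n]}(0) \oplus x_i(0).
\end{equation*}
Since $n = 2^p$ and $0 \leq i \leq n-1$, one has $(i - 2^p)~[n] = (i - n)~[n] = i$, so the right-hand side collapses to $x_i(0) \oplus x_i(0) = 0$. As $i$ was arbitrary, $x(n) = x(2^p) = (0, \ldots, 0)$. It then remains only to invoke Proposition~\ref{prop_basic}.2, by which $(0, \ldots, 0)$ is a stable configuration: hence $x(t) = (0, \ldots, 0)$ for all $t \geq n$, and therefore every configuration converges to $(0, \ldots, 0)$ in no more than $n$ time steps.

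Since all the combinatorial content has been absorbed into Lemma~\ref{lem_local}, there is no genuine obstacle here. The two points that deserve a word of care are that $q = p$ is a valid instance of Lemma~\ref{lem_local} (the lemma is stated for every $q \in \mathbb{N}$), and that the cancellation $x_{(i-2^p)~[n]}(0) = x_i(0)$ uses the hypothesis $n = 2^p$ in an essential way — which, incidentally, also suggests that the bound $n$ is sharp, the extremal instances being configurations whose trajectory has not yet reached $(0,\ldots,0)$ at any step strictly before $n$.
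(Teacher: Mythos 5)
Your proposal is correct and follows exactly the paper's own argument: instantiate Lemma~\ref{lem_local} at $q=p$, use $n=2^p$ to collapse the index $(i-2^p)~[n]$ to $i$ so that $x_i(n)=x_i(0)\oplus x_i(0)=0$, and conclude via the stability of $(0,\ldots,0)$ from Proposition~\ref{prop_basic}.2. Your explicit appeal to Proposition~\ref{prop_basic}.2 is a minor tidiness the paper leaves implicit; otherwise the two proofs coincide.
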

\begin{proof}
  Since $n = 2^p$, by Lemma~\ref{lem_local}, we directly draw:
  \begin{equation*}
    \forall i \in V,\ x_i(n) = x_i(0) \oplus x_{i+n~[n]}(0) = x_i(0) \oplus x_i(0)
    = 0 \text{.}
  \end{equation*}
  This allows to conclude that any configuration $x$ converges to the stable
  configuration $(0, \dots, 0)$ in no more than $n$ time steps.
\end{proof}
Now, let us consider the configurations for which the convergence time is
maximal.
\begin{lem}
  \label{lem_repeated}
  Let $N$ and $N'$ be two $2$-\textsc{xor} circulant networks of respective
  sizes $n = 2^{p+1}$ and $n' = 2^p$, $p \in \mathbb{N}^*$, and
  interaction-steps $0$. Let $x'$ be a configuration of size $2^p$ and $x = (x',
  x')$ be a repeated configuration of size $2^{p+1}$. Then, for any $t \in
  \mathcal{T}$, $x(t) = (x'(t), x'(t))$.
\end{lem}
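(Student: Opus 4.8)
The plan is to prove this by induction on $t$, showing that the "repeated" structure $x(t) = (x'(t), x'(t))$ is preserved by one application of the global transition function in the larger network $N$. The base case $t=0$ is immediate from the hypothesis $x = x(0) = (x', x')$ with $x'(0) = x'$. For the inductive step, I would assume $x(t) = (x'(t), x'(t))$ and compute $x(t+1) = F(x(t))$ component by component, using that $N$ has interaction-step $0$, so that for every automaton $i \in V$ of $N$ we have $x_i(t+1) = x_{(i-1)~[n]}(t) \oplus x_i(t)$ (the two in-neighbours of $i$ being $i-1$ and $i$, since the Hamiltonian circuit gives the arc $(i-1,i)$ and interaction-step $s=0$ gives the arc $(i,i)$). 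Here $n = 2^{p+1}$ and $n' = 2^p$.

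The key computation splits into two ranges. For $i$ in the "interior" of a half-block, i.e. $1 \le i \le n'-1$ or $n'+1 \le i \le n-1$, both $i$ and $i-1$ lie in the same semi-configuration, so the update of $N$ at position $i$ literally coincides with the update of $N'$ at position $i~[n']$: writing $x(t) = (x'(t), x'(t))$ we get $x_i(t+1) = x'_{(i-1)~[n']}(t) \oplus x'_{i~[n']}(t) = x'_i(t+1)$, and the same expression appears in both halves. The only positions needing care are the two "seams" $i = 0$ and $i = n' = 2^p$, where the in-neighbour $i-1~[n]$ wraps into the \emph{other} semi-configuration. But because $x(t) = (x'(t), x'(t))$ is repeated, the state at position $n-1$ equals the state at position $n'-1$, and the state at position $n'-1$ equals... again the state at position $n'-1$; concretely, $x_{n-1}(t) = x'_{n'-1}(t) = x_{n'-1}(t)$, so at $i=0$ we get $x_0(t+1) = x_{n-1}(t)\oplus x_0(t) = x'_{n'-1}(t)\oplus x'_0(t) = x'_0(t+1)$, and identically at $i = n'$ we get $x_{n'}(t+1) = x_{n'-1}(t)\oplus x_{n'}(t) = x'_{n'-1}(t)\oplus x'_0(t) = x'_0(t+1)$. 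Hence both seam positions carry the value $x'_0(t+1)$, matching the interior formula, and $x(t+1) = (x'(t+1), x'(t+1))$, closing the induction.

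The main (and essentially only) obstacle is the bookkeeping at the two seam indices $0$ and $2^p$: one must check that the cyclic wrap-around in the size-$2^{p+1}$ network, which sends the in-neighbour of position $0$ to position $2^{p+1}-1$ (resp. the in-neighbour of position $2^p$ to position $2^p - 1$), produces exactly the same XOR as the corresponding seam update in the size-$2^p$ network, which sends the in-neighbour of position $0$ to position $2^p - 1$. This works precisely because the configuration is repeated, so that position $2^{p+1}-1$ and position $2^p - 1$ hold identical states at every time step by the induction hypothesis — it is the repetition hypothesis, rather than any arithmetic coincidence, that makes the two dynamics agree. Everything else is a routine unwinding of the interaction-step-$0$ local rule, which I would state once and apply uniformly.
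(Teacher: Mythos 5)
Your proof is correct and takes essentially the same route as the paper's: induction on $t$, applying the interaction-step-$0$ local rule $x_i(t+1) = x_{(i-1)~[n]}(t) \oplus x_i(t)$ and using the repetition hypothesis to identify the two halves. Your explicit bookkeeping at the seam indices $i=0$ and $i=2^p$ merely spells out what the paper handles implicitly when it switches from indices modulo $n$ to indices modulo $n'$.
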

\begin{proof}
  Considering an arbitrary repeated configuration $x$ of $N$, by induction on $t$, we show that
  $\forall t \in \mathcal{T},\ x(t) = (x'(t), x'(t))$. Let
  us denote by $G' = (V', A')$ the interaction graph of $N'$.\smallskip
  
  By hypothesis, the proposition is true for $t = 0$.\smallskip

  Now, consider that $x(t) = (x'(t), x'(t))$ for $t \in \mathcal{T}$ and that
  \begin{equation*}
    \forall i \in V,\ x_i(t + 1) = x_{(i-1)~[n]}(t) \oplus x_{i}(t) \text{.}
  \end{equation*}
  Since $x(t)$ is a repeated configuration, we have:
  \begin{equation*}
    \begin{split}
      \forall i \in V,\ x_i(t + 1) &
      = x_{(i-1)~[n]}(t) \oplus x_{i}(t)\\
      & = x_{(i-1+2^p)~[n]}(t) \oplus x_{(i+2^p)~[n]}(t)\\
      & = x_{(i+2^{p})~[n]}(t+1) \text{.}
    \end{split}
  \end{equation*}
  Thus, $x(t+1)$ is also repeated and it satisfies:
  \begin{equation*}
    \begin{split}
      \forall i \in V', x_i(t+1) &
      = x_{(i-1)~[n']}(t) \oplus x_{i}(t)\\
      & = x'_{(i-1)~[n']}(t) \oplus x'_i(t)\\
      & = x'_i(t+1) \text{.}
    \end{split}
  \end{equation*}
  As a result, it holds that $x(t+1) = (x'(t+1), x'(t+1))$.
\end{proof}
\begin{prop}
  \label{prop_convergence-lineaire-impair}
  Let $N$ be a $2$-\textsc{xor} circulant network of size $n = 2^p$, $p \in
  \mathbb{N}^*$, and interaction-step $0$. Any configuration $x$ such that $n
  \cdot d(x) \equiv 1~[n]$ (with an odd number of $1$s) converges in $n$ time
  steps exactly.
\end{prop}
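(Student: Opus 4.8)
The plan is to argue by induction on $p\geq 1$, reducing size $2^{p+1}$ to size $2^p$ through Lemma~\ref{lem_repeated} after using Lemma~\ref{lem_local} to reach a repeated configuration. For the base case $p=1$, the (unique) size-$2$ network of interaction-step $0$ updates by $x(t+1)=(x_0\oplus x_1,\, x_0\oplus x_1)$; the only two configurations with an odd number of $1$s, namely $(1,0)$ and $(0,1)$, each go to $(1,1)$ in one step and to $(0,0)$ in two, so convergence takes exactly $n=2$ steps. Now assume the statement for networks of size $2^p$ and interaction-step $0$, let $N$ be a $2$-\textsc{xor} circulant network of size $n=2^{p+1}$ and interaction-step $0$, and write the initial configuration as $x(0)=(x',x'')$ with $x'$ and $x''$ its two semi-configurations of size $2^p$.

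First I would apply Lemma~\ref{lem_local} with $q=p$: from $x_i(2^p)=x_{(i-2^p)~[n]}(0)\oplus x_i(0)$ for all $i\in V$, a direct index computation modulo $n=2^{p+1}$ (splitting into the cases $i<2^p$ and $i\geq 2^p$) gives $x(2^p)=(y',y')$ with $y':=x'\oplus x''$, a repeated configuration. The parity step comes next: if $x(0)$ has an odd number of $1$s then the numbers of $1$s in $x'$ and in $x''$ sum to an odd integer, hence have opposite parities; since the number of $1$s in $x'\oplus x''$ is congruent modulo $2$ to the sum of those in $x'$ and $x''$, the configuration $y'=x'\oplus x''$ again has an odd number of $1$s, now seen as a configuration of the $2$-\textsc{xor} circulant network $N'$ of size $2^p$ and interaction-step $0$.

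Then I would invoke Lemma~\ref{lem_repeated} for the repeated configuration $x(2^p)=(y',y')$: for every $t\in\mathcal{T}$, $x(2^p+t)=(y'(t),y'(t))$, where $y'(t)$ is the configuration reached by $N'$ from $y'$ after $t$ steps. By the induction hypothesis applied to $N'$ and $y'$, convergence in $N'$ occurs in exactly $2^p$ steps, i.e.\ $y'(2^p-1)\neq(0,\ldots,0)$ and $y'(2^p)=(0,\ldots,0)$. Hence $x(2^{p+1}-1)=(y'(2^p-1),y'(2^p-1))\neq(0,\ldots,0)$ while $x(2^{p+1})=(y'(2^p),y'(2^p))=(0,\ldots,0)$ (the latter also being guaranteed by Theorem~\ref{thm_convergence-lineaire}), and since $(0,\ldots,0)$ is a fixed point this says the transient of $x$ has length exactly $n=2^{p+1}$, closing the induction. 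The point to get right is the index arithmetic that produces $x(2^p)=(y',y')$ with the clean formula $y'=x'\oplus x''$, together with the observation that oddness of the number of $1$s is inherited by $y'$ — which is precisely what excludes the two alternating configurations that behaved differently in Proposition~\ref{prop_repeated_configuration}, every repeated configuration having an even number of $1$s.
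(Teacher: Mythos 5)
Your proof is correct and follows essentially the same route as the paper's: induction on $p$, using Lemma~\ref{lem_local} to show that $x(2^p)$ is the repeated configuration $(x'\oplus x'',\,x'\oplus x'')$, a parity argument to see that the oddness of the number of $1$s is inherited, and Lemma~\ref{lem_repeated} plus the induction hypothesis to conclude. Your write-up is in fact a bit more explicit than the paper's (the clean formula $y'=x'\oplus x''$, the direct base case, and the use of the fixed point $(0,\ldots,0)$ to justify ``exactly $n$ steps''), but the underlying argument is the same.
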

\begin{proof}Proposition~\ref{prop_convergence-lineaire-impair} is proven by induction on $p$.\smallskip


  If $p = 1$, according to Propositions~\ref{prop_basic}.3
  and~\ref{prop_repeated_configuration}, configurations of repetition degree
  $\log_2(n) - 1$ are proven to converge in $2$ time steps. Thus,
  Proposition~\ref{prop_convergence-lineaire-impair} holds for $p =
  1$.\smallskip

  Suppose that for $p = q$, any configuration $x$ such that $2^q \cdot
  d(x) \equiv 1~[n]$ converges in $2^q$ time steps.\smallskip

  Now, suppose that $p = q + 1$ and consider a $2$-\textsc{xor} circulant 
  network $N$ of size $n = 2^{q+1}$ and interaction-step $0$. Let  $x$ be a
  configuration of size $2^{q+1}$ such that $n \cdot d(x) \equiv 1~[n]$. We
  show that after $2^q$ time steps:
  \begin{enumerate}
  \item \emph{$x(2^q)$ is a repeated configuration of the form $x(2^q) =
      (x'(2^q),x'(2^q))$}. By Lemma~\ref{lem_local}, $\forall i \in \{0, \ldots,
    2^q-1\},\ x_i(2^q) = x_i(0) \oplus x_{(i+2^{q})~[n]}(0)$. Hence:
    \begin{multline*}
      \forall i \in \{0, \ldots, 2^q-1\},\\
      x_i(2^q) = x_{(i+2^{q+1})~[n]}(0) \oplus x_{(i+2^{q})~[n]}(0) = 
      x_{(i+2^{q})~[n]}(2^q) \text{.}
    \end{multline*}
  \item \emph{$x'$ has an odd number of $1$s}. By Lemma~\ref{lem_local} and the
    property above, since $\forall i \in \{0, \ldots 2^q-1\}, x'_i(2^q) =
    x_i(2^q) = x_i(0) + x_{(i+2^{q})~[n]}(0)$, each automaton of $x(0)$
    influences exactly one automaton of $x'$. If $x'_i(2^q) = 0$, then the
    states of both the automata of $x(0)$ that influence $x'_i(2^q)$ must have
    the same parity. If $x'_i(2^q) = 1$ then the states of both the automata of
    $x(0)$ that influence $x'_i(2^q)$ must have different parities. Since
    there is an odd number of $1$s in $x(0)$, there is an odd number of $1$s in
    $x'(2^q)$.
  \end{enumerate}
  By Lemma~\ref{lem_repeated}, $x(2^q)$ behaves exactly like $x'(2^q)$. By the
  induction hypothesis, $x'$ converges in exactly $2^q$ time steps. Thus $x$
  converges in exactly $n = 2^{q+1}$ time steps.
\end{proof}

\section{Conclusion and perspectives}
\label{sec_conclusion}

With this study, we have endeavoured to show that non-monotony is an interesting
concept {\it per se}, despite the lack of specific attention it has received so
far. On the one hand, to serve as a stepping-stone and acquire some initial
intuitions in this domain, we have considered a special family of non-monotone
Boolean automata networks that we named \textsc{xor} circulant networks. In
particular, we have focused on the trajectorial and asymptotic behaviours of
these networks, considered their convergence times and characterised their
attractors. Globally, this preliminary formal analysis revealed that simple
non-monotone networks can exhibit non-trivial, engaging properties. On the other
hand, more generally and informally, we also have put forward several arguments
to support the idea that work needs to be done to build a better understanding
of the role of non-monotony in the behaviours of automata networks.  In these
lines, we have mentioned that studies in this context could find concrete and
relevant applications in biology and in particular in the modelling of genetic
regulation networks by automata networks. In addition, we have given two
theoretical arguments in favour of our insights by which non-monotony could be
responsible for singular network behaviours.  First,
exploiting~\cite{Delaplace2011}, we have argued that non-monotony may be
responsible for the strongly connected components of networks being
non-separable, minimal functional modules.
Second, in Proposition~\ref{prop_asynchronous-general}, with the state
transition systems formalism, we have considered ``synchronism sensitivity'',
that is, the property of Boolean automata networks to display significant
behavioural changes when synchronism is added to their automata state
updates. And in this context, we have shown that the smallest synchronism
sensitive Boolean automata networks are also non-monotone.  \medskip

The issues presented in this paper open many research directions that could help
develop a better understanding of the precise role of non-monotony in formal
automata networks and, {\it a fortiori}, in real biological regulation networks.
One of these perspectives consists in identifying the relations that exist
between monotone and non-monotone Boolean automata networks.
In~\cite{Noual2011a}, some preliminary results are derived on synchronism
sensitivity. In particular it is shown that this property requires specific
circuits underlying the networks structures. Also, monotone examples of
synchronism sensitive networks are given. The interesting point is that all of
them seem to involve a \emph{monotone coding} of non-monotony. With
Proposition~\ref{prop_asynchronous-general}, this naturally raises the question
of whether non-monotony (taken in a more general sense than what we did formally
above) can account in a certain way for the synchronism sensitivity in arbitrary
monotone and non-monotone networks.  Thus,
Proposition~\ref{prop_asynchronous-general} together with the work presented
in~\cite{Noual2011a} call for further researches in this direction.  With
sufficient knowledge in this context, we then hope to move on to the subject of
modularity as developed in~\cite{Delaplace2011} and work on establishing the
exact non-separability conditions of strongly connected networks. In this
context, the first important questions that need to be addressed are: ``Does
there exist monotone strongly connected networks that are separable into
functional modules?'' and ``How does non-monotony relate to the non-separability of
non-monotone networks?''. The relevance of these questions lies in that their
answers will help understand modularity in biological regulation networks, which
is a central issue in present biological research frameworks such as synthetic
biology. Eventually, further analyses also need to be done on the dynamical
behaviours of \textsc{xor} circulant networks. Indeed, we believe that these
networks constitute very promising instances of non-monotone networks because of
their apparent simplicity and because, since they involve underlying structural
circuits, their dynamical behaviours are potentially diverse and complex. Thus,
pursuing in this direction, we hope to obtain generalisations of the results
that figure above concerning the parallel updating mode by relaxing structural
constraints step by step. Also, another interesting perspective in this
framework is to consider \textsc{xor} circulant networks as state transition
systems, under the asynchronous and general updating modes. This perspective is
motivated in particular by the fact that, according to
Proposition~\ref{prop_asynchronous-general}, the smallest synchronism sensitive
networks are either \textsc{xor} circulant networks of size $2$ and
interaction-step $0$, or networks that have the same structures as these and
comparable non-monotone interactions.

\section{Acknowledgements}
\label{sec_acknowledgments}

We thank Florian Rabin for his relevant comments and are indebted to the
\emph{Agence nationale de la recherche} and the \emph{R{\'e}seau national des
  syst{\`e}mes complexes} that have respectively supported this work through the
projects Synbiotic (\textsc{anr} 2010 \textsc{blan} 0307\,01) and
M{\'e}t{\'e}ding (\textsc{rnsc} \textsc{ai}10/11-\textsc{l}03908).

\bibliographystyle{elsarticle-num}
\bibliography{non_monotony} 

\end{document}